\newtheorem{theorem}{Theorem}
\newtheorem{proposition}[theorem]{Proposition}
\newcommand{\As}{\mathscr{A}}
\newcommand{\Bs}{\mathscr{B}}
\newcommand{\sg}{\sigma}
\newcommand{\RA}{R^{\As}}
\newcommand{\RB}{R^{\Bs}}
\newcommand{\RaA}{R_{\alpha}^{\As}}
\newcommand{\RaB}{R_{\alpha}^{\Bs}}
\newcommand{\IMP}{\; \Rightarrow \;}
\newcommand{\AND}{\, \wedge \,}
\newcommand{\CS}{\mathcal{R}(\sg)}
\newcommand{\CSp}{\mathcal{R}_{\star}(\sg)}
\newcommand{\preford}{\sqsubseteq}
\newcommand{\IFF}{\Longleftrightarrow}
\newcommand{\rarr}{\rightarrow}
\newcommand{\id}{\mathsf{id}}
\newcommand{\ie}{\textit{i.e.}~}
\newcommand{\Count}{\#}
\newcommand{\LL}{\mathcal{L}}
\newcommand{\Lk}{\mathcal{L}_k}
\newcommand{\Lc}{\mathcal{L}(\Count)}
\newcommand{\Lck}{\mathcal{L}_{k}(\Count)}
\newcommand{\ELk}{\exists\mathcal{L}_{k}}
\newcommand{\Lvk}{\mathcal{L}^{k}}
\newcommand{\Lvck}{\mathcal{L}^{k}(\Count)}
\newcommand{\ELvk}{\exists\mathcal{L}^{k}}
\newcommand{\eqLk}{\equiv^{\Lk}}
\newcommand{\eqELk}{\equiv^{\ELk}}
\newcommand{\eqLck}{\equiv^{\Lck}}
\newcommand{\eqLvk}{\equiv^{\Lvk}}
\newcommand{\eqELvk}{\equiv^{\ELvk}}
\newcommand{\eqLvck}{\equiv^{\Lvck}}
\newcommand{\KK}{\mathcal{K}}
\newcommand{\eqL}{\equiv^{\LL}}
\newcommand{\vphi}{\varphi}
\newcommand{\iffdef}{\;\; \stackrel{\Delta}{\IFF} \;\;}
\newcommand{\GG}{\mathsf{G}}
\newcommand{\Ck}{\mathbb{C}_{k}}
\newcommand{\Cl}{\mathbb{C}_{l}}
\newcommand{\Kl}{\mathsf{Kl}}
\newcommand{\epsA}{\varepsilon_{\As}}
\newcommand{\eqak}{\rightleftarrows_{k}}
\newcommand{\eqbk}{\leftrightarrow_{k}}
\newcommand{\eqck}{\cong_{\Kl(\Ck)}}
\newcommand{\eqaCk}{\rightleftarrows_{k}^{\mathbb{C}}}
\newcommand{\eqbCk}{\leftrightarrow_{k}^{\mathbb{C}}}
\newcommand{\eqcCk}{\cong_{k}^{\mathbb{C}}}
\newcommand{\Alk}{A^{\leq k}}
\newcommand{\Blk}{B^{\leq k}}
\newcommand{\CC}{\mathcal{C}}
\newcommand{\Gd}{G}
\newcommand{\Ek}{\mathbb{E}_{k}}
\newcommand{\El}{\mathbb{E}_{l}}
\newcommand{\REk}{R^{\Ek \As}}
\newcommand{\Pk}{\mathbb{P}_{k}}
\newcommand{\Mk}{\mathbb{M}_{k}}
\newcommand{\MLck}{\MLk(\Count)}
\newcommand{\eqEMk}{\equiv^{\exists\MLk}}
\newcommand{\eqMk}{\equiv^{\MLk}}
\newcommand{\eqMck}{\equiv^{\MLck}}
\newcommand{\pref}{\sqsubseteq}
\newcommand{\prefgt}{\sqsupseteq}
\newcommand{\Ralph}{R_{\alpha}}
\newcommand{\PMA}{P^{\Mk (\As, a)}}
\newcommand{\PA}{P^{\As}}
\newcommand{\PB}{P^{\Bs}}
\newcommand{\RMA}{R^{\Mk (\As, a)}}
\newcommand{\simord}{\preceq}
\newcommand{\Tk}{\mathbb{P}_k}
\newcommand{\Kcomp}{\bullet}
\newcommand{\nset}{\mathbf{n}}
\newcommand{\kset}{\mathbf{k}}
\newcommand{\Linf}{\mathcal{L}_{\infty,\omega}}
\newcommand{\MLk}{\mathcal{M}_{k}}
\newcommand{\lsem}{\llbracket}
\newcommand{\rsem}{\rrbracket}
\newcommand{\boxa}{\Box_{\alpha}}
\newcommand{\dia}{\Diamond_{\alpha}}
\newcommand{\elen}{\exists_{\leq n}}
\newcommand{\egen}{\exists_{\geq n}}
\newcommand{\eqaEk}{\rightleftarrows_{k}^{\mathbb{E}}}
\newcommand{\eqbEk}{\leftrightarrow_{k}^{\mathbb{E}}}
\newcommand{\eqcEk}{\cong_{k}^{\mathbb{E}}}
\newcommand{\eqaPk}{\rightleftarrows_{k}^{\mathbb{P}}}
\newcommand{\eqbPk}{\leftrightarrow_{k}^{\mathbb{P}}}
\newcommand{\eqcPk}{\cong_{k}^{\mathbb{P}}}
\newcommand{\eqaMk}{\rightleftarrows_{k}^{\mathbb{M}}}
\newcommand{\eqbMk}{\leftrightarrow_{k}^{\mathbb{M}}}
\newcommand{\eqcMk}{\cong_{k}^{\mathbb{M}}}
\newcommand{\gb}{\sim^{\mathsf{g}}}
\newcommand{\Fraisse}{Fra\"{i}ss\'{e}~}
\newcommand{\pow}{\mathscr{P}}
\newcommand{\comp}{{\uparrow}}
\newcommand{\adj}{\frown}
\newcommand{\da}{{\downarrow}}
\newcommand{\hgt}{\mathsf{ht}}
\newcommand{\td}{\mathsf{td}}
\newcommand{\tw}{\mathsf{tw}}
\newcommand{\md}{\mathsf{md}}
\newcommand{\Gf}{\mathcal{G}}
\newcommand{\cnE}{\kappa^{\mathbb{E}}}
\newcommand{\cnP}{\kappa^{\mathbb{P}}}
\newcommand{\cnM}{\kappa^{\mathbb{M}}}
\newcommand{\lbfn}{\lambda}
\newcommand{\pth}{\mathsf{path}}
\newcommand{\labar}[1]{\overset{#1}{\to}}
\newcommand{\ve}{\varepsilon}
\newcommand{\Zp}{\mathbb{Z}^{+}}
\newcommand{\Comon}{\mathsf{Comon}}
\newcommand{\MM}{\mathbb{M}}
\newcommand{\PM}{\mathbb{P}}
\newcommand{\Momega}{\MM_{\omega}}
\newcommand{\es}{\varnothing}
\newcommand{\WAB}{\mathsf{W}_{\As,\Bs}}
\newcommand{\SAB}{\mathcal{S}(\As,\Bs)}
\newcommand{\SBA}{\mathcal{S}(\Bs,\As)}
\newcommand{\cvr}{\prec}
\newcommand{\rcvr}{\succ}
\begin{document}

\title{Relating Structure and Power: Comonadic Semantics for Computational Resources}

%\titlerunning{Relating Structure and Power}%optional, please use if title is longer than one line

\author{Samson Abramsky\thanks{samson.abramsky@cs.ox.ac.uk}~}
\author{Nihil Shah\thanks{nihil@berkeley.edu}}

\affil{Department of Computer Science, University of Oxford}
%Wolfson Building, Parks Road, Oxford OX1 3QD, U.K.}
%\affil[2]{}
\date{}
\maketitle

\maketitle

\begin{abstract}
Combinatorial games are widely used in finite model theory, constraint satisfaction, modal logic and concurrency theory to characterize logical equivalences between structures. 
In particular, Ehrenfeucht-\Fraisse games, pebble games, and bisimulation games play a central role. 
We show how each of these types of games can be described in terms of an indexed family of comonads on the category of relational structures and homomorphisms. The index $k$ is a resource parameter which bounds the degree of access to the underlying structure. 
The coKleisli categories for these comonads can be used to give syntax-free characterizations of a wide range of important logical equivalences. Moreover, the coalgebras for these indexed comonads can be used to characterize key combinatorial parameters: tree-depth for the Ehrenfeucht-\Fraisse comonad, tree-width for the pebbling comonad, and synchronization-tree depth for the modal unfolding comonad. These results pave the way for systematic connections between two major branches of the field of logic in computer science which hitherto have been almost disjoint: categorical semantics, and finite and algorithmic model theory.
\end{abstract}

\section{Introduction}

There is a remarkable divide in the field of logic in Computer Science, between two distinct strands: one focussing on semantics and compositionality (``Structure''), the other on expressiveness and complexity (``Power''). 
It is remarkable because these two fundamental aspects of our field are studied using almost disjoint technical languages and methods, by almost disjoint research communities.
We believe that bridging this divide is a major issue in Computer Science, and may hold the key to fundamental advances in the field.

In this paper, we develop a novel approach to relating categorical semantics, which exemplifies the first strand, to finite model theory, which exemplifies the second. It builds on the ideas introduced in \cite{abramsky2017pebbling}, but goes much further, showing clearly that there is a strong and robust connection, which can serve as a basis for many further developments.

\subsection*{The setting}
%\textbf{The setting}
Relational structures and the homomorphisms between them play a fundamental r\^{o}le in finite model theory, constraint satisfaction and database theory. The existence of a homomorphism $A \rarr B$ is an equivalent formulation of constraint satisfaction, and also equivalent to the preservation of existential positive sentences \cite{chandra1977optimal}. This setting also generalizes what has become a central perspective in graph theory \cite{hell2004graphs}.

\subsection*{Model theory and deception}
%\textbf{Model theory and deception}
In a sense, the purpose of model theory is ``deception''.  It allows us to see structures not ``as they really are'', \ie up to isomorphism, but only up to \emph{definable properties}, where definability is relative to a logical language $\LL$. The key notion is \emph{logical equivalence} $\eqL$. Given structures $\As$, $\Bs$ over the same vocabulary:
\[ \As \eqL \Bs \iffdef \forall \vphi \in \LL. \; \As \models \vphi \; \IFF \; \Bs \models \vphi . \]
If a class of structures $\KK$ is definable in $\LL$, then it must be saturated under $\eqL$. Moreover, for a wide class of cases of interest in finite model theory, the converse holds \cite{kolaitis1992infinitary}.

The idea of syntax-independent characterizations of logical equivalence is quite a classical one in model theory, exemplified by the Keisler-Shelah theorem \cite{shelah1971every}.
It acquires additional significance in finite model theory, where model comparison games such as Ehrenfeucht-\Fraisse games, pebble games and bisimulation games play a central role \cite{Libkin2004}.

We offer a new perspective on these ideas. We shall study these games, not as external artefacts, but as semantic constructions in their own right.
Each model-theoretic comparison game encodes ``deception'' in terms of limited access to the structure. These limitations are indexed by a parameter which quantifies the resources which control this access. For Ehrenfeucht-\Fraisse games and bisimulation games, this is the number of rounds; for pebble games, the number of pebbles.

\subsection*{Main Results}
We now give a conceptual overview of our main results. Technical details will be provided in the following sections. 
%We assume some familiarity with finite model theory notions, but essentially nothing beyond the notion of category on the categorical side.

We shall consider three forms of model comparison game: Ehrenfeucht-\Fraisse games, pebble games and bisimulation games \cite{Libkin2004}.
For each of these notions of game $\GG$, and value of the resource parameter $k$, we shall define a corresponding \emph{comonad} $\Ck$ on the category of relational structures and homomorphisms  over some relational vocabulary. For each structure $\As$, $\Ck \As$ is another structure over the same vocabulary, which encodes the limited access to $\As$ afforded by playing the game on $\As$ with $k$ resources. There is always an associated homomorphism $\epsA : \Ck \As \rarr \As$ (the \emph{counit} of the comonad), so that $\Ck \As$ ``covers'' $\As$. Moreover, given a homomorphism $h : \Ck \As \rarr \Bs$, there is a \emph{Kleisli coextension} homomorphism $h^* : \Ck \As \rarr \Ck \Bs$. This allows us to form the \emph{coKleisli category} $\Kl(\Ck)$ for the comonad. The objects are relational structures, while the morphisms from $\As$ to $\Bs$ in $\Kl(\Ck)$ are exactly the homomorphisms of the form $\Ck \As \rarr \Bs$. Composition of these morphisms uses the Kleisli coextension. The connection between this construction and the corresponding form of game $\GG$ is expressed  by the following result:
\begin{theorem}
The following are equivalent:
\begin{enumerate}
\item There is a coKleisli morphism $\Ck \As \rarr \Bs$
\item Duplicator has a winning strategy for the existential $\GG$-game with $k$ resources, played from $\As$ to $\Bs$.
\end{enumerate}
\end{theorem}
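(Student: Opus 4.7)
The plan is to prove the equivalence uniformly in the game-type $\GG$, though the details of the construction of $\Ck \As$ will differ for the three cases (EF, pebbling, bisimulation). In each case the intended picture is the same: the universe of $\Ck \As$ consists of Spoiler's possible ``plays'' on $\As$ using at most $k$ resources (e.g.\ non-empty sequences of elements of $A$ of length at most $k$ for the EF comonad; sequences of pebble-element pairs for the pebble comonad; paths of length $\leq k$ in the transition graph for the bisimulation comonad). The counit $\epsA : \Ck \As \rarr \As$ sends a play $s$ to the element Spoiler has just placed, and the relational structure on $\Ck \As$ is set up so that a tuple of plays $(s_1, \dots, s_n)$ is related in $\Ck \As$ precisely when $(\epsA(s_1), \dots, \epsA(s_n))$ is related in $\As$ \emph{and} the plays are ``coherent'' (they arise as positions in a common play, consistent with the game's resource discipline).

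For the direction $(2)\Rightarrow(1)$, I would transform a Duplicator winning strategy $\sigma$ into a map $f : \Ck \As \rarr \Bs$ by defining $f(s)$ to be Duplicator's response, according to $\sigma$, after Spoiler has executed the play $s$ in $\As$. Because the existential $\GG$-game only requires that Duplicator preserve positive atomic formulas at each stage, the winning condition for $\sigma$ is exactly the statement that whenever $(s_1, \dots, s_n)$ is a tuple of plays that Duplicator has had to respond to, and the underlying $\As$-tuple $(\epsA(s_1),\dots,\epsA(s_n))$ satisfies some relation $R$, then $(f(s_1), \dots, f(s_n))$ satisfies $R$ in $\Bs$. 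By the design of the relations on $\Ck \As$, this is precisely the homomorphism condition $f : \Ck \As \rarr \Bs$.

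Conversely, for $(1)\Rightarrow(2)$, given a coKleisli morphism $f : \Ck\As \rarr \Bs$, I would let Duplicator, after Spoiler has played the position $s \in \Ck\As$, respond with $f(s) \in B$. Since every ``test'' that Spoiler can make in the existential game corresponds, by the design of $\Ck\As$, to a tuple of plays that is related in $\Ck\As$ whenever the underlying $\As$-tuple satisfies the relation in question, the homomorphism property of $f$ immediately yields that Duplicator's responses satisfy the same positive atomic facts in $\Bs$ — this is exactly a winning strategy.

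The real work, and the main obstacle, is not this correspondence in the abstract but the concrete \emph{design} of each comonad $\Ck$: the universe, the counit, and especially the relations on $\Ck \As$ must be chosen so that the resource bounds of the game, the game-specific rules (e.g.\ pebble identity in the pebble game; one-step transitions in the bisimulation game), and the constraint ``Spoiler's tuple is in $R^{\As}$ implies Duplicator's tuple is in $R^{\Bs}$'' all line up on the nose. Once this design is fixed — and the comonad laws independently verified — the equivalence is essentially a matter of reading off the two translations above. A minor subtlety is that coherence of plays must be built into the relational structure of $\Ck\As$ (not merely conjoined separately), so that the homomorphism condition on $f$ automatically respects the game's history-dependence.
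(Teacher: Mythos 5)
Your proposal is correct and matches the paper's approach exactly: the paper treats this theorem as a schema instantiated by its three concrete results (Theorems~\ref{EFgamethm}, \ref{strmorth}, \ref{simthm}), and in each case the proof is precisely your correspondence --- Duplicator's response to a Spoiler play $s$ becomes $f(s)$, with the relational structure on $\Ck\As$ (pairwise prefix-comparability plus game-specific conditions, with $\RA$ holding on the $\epsA$-images) designed so that the homomorphism condition coincides with the winning condition. Your closing observation that the substantive work lies in the design of $\Ck\As$ and the verification of the comonad laws, rather than in the strategy/morphism translation itself, is also exactly how the paper apportions the effort.
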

The existential form of the game has only a ``forth'' aspect, without the ``back''. This means that Spoiler can only play in $\As$, while Duplicator only plays in $\Bs$. This corresponds to the asymmetric form of the coKleisli morphisms $\Ck \As \rarr \Bs$. Intuitively, Spoiler plays in $\Ck \As$, which gives them limited access to $\As$, while Duplicator plays in $\Bs$. The Kleisli coextension guarantees that Duplicator's strategies can always be lifted to $\Ck \Bs$; while we can always compose a strategy $\Ck \As \rarr \Ck \Bs$ with the counit on $\Bs$ to obtain a coKleisli morphism.

This asymmetric form may seem to limit the scope of this approach, but in fact this is not the case. For each of these comonads $\Ck$, we have the following equivalences:
\begin{itemize}
\item $\As \eqak \Bs$ iff there are coKleisli morphisms $\Ck \As \rarr \Bs$ and $\Ck \Bs \rarr \As$. Note that there need be no relationship between these morphisms.
\item $\As \eqck \Bs$ iff $\As$ and $\Bs$ are isomorphic in the coKleisli category $\Kl(\Ck)$. This means that there are morphisms $\Ck \As \rarr \Bs$ and $\Ck \Bs \rarr \As$ which are inverses of each other in $\Kl(\Ck)$.
\end{itemize}
Clearly, $\eqck$ strictly implies $\eqak$.
%These three notions form a strict hierarchy: (E3) implies (E2), which implies (E1).
We can also define an intermediate ``back-and-forth'' equivalence $\eqbk$, parameterized by a winning condition $\WAB \subseteq \Ck \As \times \Ck \Bs$.
%in terms of \emph{sets} of coKleisli morphisms from $\As$ to $\Bs$ and from $\Bs$ to $\As$, subject to a ``local invertibility'' condition.

For each of our three types of game, there are corresponding  fragments $\Lk$ of first-order logic:
\begin{itemize}
\item For Ehrenfeucht-\Fraisse games, $\Lk$ is the fragment of quantifier-rank $\leq k$.
\item For pebble games, $\Lk$ is the $k$-variable fragment.
\item For bismulation games over relational vocabularies with symbols of arity at most 2, $\Lk$ is the modal fragment \cite{andreka1998modal} with modal depth $\leq k$.
\end{itemize}
In each case, we write $\ELk$ for the existential positive fragment of $\Lk$, and $\Lck$ for the extension of $\Lk$ with counting quantifiers \cite{Libkin2004}.

We can now state our first main result, in a suitably generic form.
\begin{theorem}
For finite structures $\As$ and $\Bs$:
%\[

\begin{tabular}{llcl}
(1) & $\As \eqELk \Bs$ & $\; \IFF \;$ & $\As \eqak \Bs$. \\
(2) & $\As \eqLk \Bs$ & $\; \IFF \;$ & $\As \eqbk \Bs$. \\
(3) & $\As \eqLck \Bs$ & $\; \IFF \;$ & $\As \eqck \Bs$.
\end{tabular}
%\]
\end{theorem}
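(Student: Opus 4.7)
The plan is to prove each of the three equivalences by reducing to a classical model-comparison game, then invoking the known correspondence between that game and the syntactic fragment on the right-hand side. The preceding Theorem~1 already handles the basic translation between coKleisli morphisms and winning strategies for the existential game; the remaining work is to extend this to the back-and-forth and bijective cases, and to verify that the logic-game correspondence holds in each of the three instantiations (Ehrenfeucht-\Fraisse, pebble, and bisimulation).

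For (1), I would combine Theorem~1 with the classical fact that $\eqELk$ corresponds to Duplicator winning the existential $\GG$-game both from $\As$ to $\Bs$ and from $\Bs$ to $\As$. By Theorem~1, this is exactly the existence of coKleisli morphisms $\Ck \As \rarr \Bs$ and $\Ck \Bs \rarr \As$, which is $\As \eqak \Bs$. Each comonad requires instantiating the corresponding classical result (Chandra-Merlin-style for EF and pebbles, the existential modal version for bisimulation), but no new machinery.

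For (2), I would unpack $\As \eqbk \Bs$ in terms of a winning relation $\WAB \subseteq \Ck \As \times \Ck \Bs$. Reading pairs in $\WAB$ as matched plays, the winning condition should encode closure under both forth and back extensions, so such a relation is essentially a back-and-forth system of the standard kind that characterizes $\eqLk$. The converse direction lifts any classical back-and-forth system to a relation on the coKleisli universes, using the fact that plays in $\Ck \As$ are precisely the sequences of Spoiler moves on $\As$. The main check here is that the naturality conditions forced by the comonadic structure on $\WAB$ coincide with the usual zig-zag conditions on partial isomorphisms.

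The hardest case will be (3), since coKleisli isomorphism is a global categorical condition which must be shown equivalent to the bijective back-and-forth game characterizing $\eqLck$ (Hella, Immerman-Lander, and their bisimulation counterpart). An isomorphism in $\Kl(\Ck)$ is a pair of morphisms $f : \Ck \As \rarr \Bs$ and $g : \Ck \Bs \rarr \As$ whose coKleisli compositions yield the counits $\epsA$ and $\varepsilon_{\Bs}$. I would extract from such a pair a family of bijections between $k$-bounded positions in $\As$ and $\Bs$ and verify the back-and-forth property at each stage. The converse, building $f$ and $g$ from a bijective back-and-forth family and checking the Kleisli equations, is the main obstacle: it is here that the specific combinatorial structure of each of the three comonads (sequences, partial functions on pebble indices, synchronization trees) must be exploited to realize the bijections as genuine homomorphisms that invert one another up to the comonadic coherence.
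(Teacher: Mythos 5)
Your overall architecture is the same as the paper's: for each comonad, characterize the three comonadic equivalences game-theoretically (existential game for $\eqak$, a back-and-forth game for $\eqbk$, a bijection-type game for $\eqck$), then invoke the classical logic--game correspondences (Kolaitis--Vardi for the existential positive fragments, the Ehrenfeucht-\Fraisse{} and Hennessy--Milner theorems for the full fragments, Hella's bijection game and graded bisimulation for the counting extensions). Parts (1) and (2) of your plan are essentially the paper's proof, modulo the fact that the paper defines $\eqbk$ via \emph{locally invertible pairs} of sets of coKleisli morphisms and proves its equivalence with the back-and-forth game as a separate proposition; your reconstruction of $\eqbk$ directly as a game is compatible with that.

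In part (3), however, there is a genuine gap. You take $\eqck$ to be plain isomorphism in $\Kl(\Ck)$ and propose to extract a family of bijections from an isomorphism pair and ``verify the back-and-forth property at each stage''. The extraction itself works: from $g \Kcomp f = \epsA$ and $f \Kcomp g = \varepsilon_{\Bs}$ one gets $g^* = {f^*}^{-1}$, so (for the Ehrenfeucht-\Fraisse{} comonad) $\psi_s(a) := f(s[a])$ is a bijection $A \cong B$ for each $s$. What fails is the verification: coKleisli morphisms preserve the relations of $\sg$ but know nothing about \emph{equality}, so the position $(s, f^*(s))$ need not induce even a partial function when $s$ repeats an element. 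Concretely, take $A = B = \{0,1\}$ over the empty vocabulary and $k=2$, and let $f$ send a sequence to its last element if it has length $1$ and to the negation of its last element if it has length $2$. Then $f^*$ is an involution and $(f,f)$ is a coKleisli isomorphism, yet the position $([0,0], f^*[0,0]) = ([0,0],[0,1])$ induces $\{(0,0),(0,1)\}$, which is not a partial isomorphism; the strategy extracted from this isomorphism loses the bijection game. This is precisely why the paper refines the definition before proving (3): $\As \eqcEk \Bs$ is defined to require \emph{$I$-morphisms} --- homomorphisms also with respect to an added binary relation $I$ interpreted as equality on both sides --- satisfying ${f^*}^{-1} = g^*$, and the analogous refinement is needed for the pebbling comonad. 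With that refinement, repeated Spoiler moves are forced to receive repeated responses and your verification goes through; without it, the step fails, and it is not even clear that the unrefined statement is true. So to repair (3) you must build equality-awareness into the notion of coKleisli isomorphism before extracting the bijection-game strategy.
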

Note that this is really a family of three theorems, one for each type of game $\GG$. Thus in each case, we capture the salient logical equivalences in syntax-free, categorical form.

We now turn to the significance of indexing by the resource parameter $k$. When $k \leq l$, we have a natural inclusion morphism $\Ck \As \rarr \Cl \As$, since playing with $k$ resources is a special case of playing with $l \geq k$ resources. This tells us that the smaller $k$ is, the easier it is to find a morphism $\Ck \As \rarr \Bs$. Intuitively, the more we restrict Spoiler's abilities to access the structure of $\As$, the easier it is for Duplicator to win the game.

The contrary analysis applies to morphisms $A \rarr \Ck \Bs$. The smaller $k$ is, the \emph{harder} it is find such a morphism. Note, however, that if $\As$ is a finite structure of cardinality $k$, then $\As \eqak \Ck \As$. In this case, with $k$ resources we can access the whole of $\As$. What can we say when $k$ is strictly smaller than the cardinality of $\As$?

It turns out that there is a beautiful connection between these indexed comonads and combinatorial invariants of structures. This is mediated by the notion of \emph{coalgebra}, another fundamental (and completely general) aspect of comonads. A coalgebra for a comonad $\Ck$ on a structure $\As$ is a morphism $\As \rarr \Ck \As$ satisfying certain properties. We define the \emph{coalgebra number} of a structure $\As$, with respect to the indexed family of comonads $\Ck$, to be the least $k$ such that there is a $\Ck$-coalgebra  on $\As$.

We now come to our second main result.
\begin{theorem}
\begin{itemize}
\item For the pebbling comonad, the coalgebra number of $\As$ corresponds precisely to the \emph{tree-width} of $\As$.
\item For the Ehrenfeucht-\Fraisse comonad, the coalgebra number of $\As$ corresponds precisely to the \emph{tree-depth} of $\As$ \cite{nevsetvril2006tree}.
\item For the modal comonad, the coalgebra number of $\As$ corresponds precisely to the \emph{modal unfolding depth} of $\As$. 
%This is a natural quantity, which will be defined in section~4.
\end{itemize}
\end{theorem}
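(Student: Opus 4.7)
The plan is to prove each of the three equivalences by first giving a concrete characterization of what a $\Ck$-coalgebra $\gamma : \As \to \Ck \As$ looks like for each comonad, and then extracting the relevant combinatorial decomposition directly from such a $\gamma$ (and vice versa).

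For the Ehrenfeucht-\Fraisse case, $\Ek \As$ should consist of nonempty sequences $[a_1, \ldots, a_j]$ of length $j \leq k$ from $\As$, with relations on sequences lifted from $\As$ only when the arguments are pairwise prefix-comparable. The counit law $\epsA \circ \gamma = \id_{\As}$ forces $\gamma(a)$ to end in $a$, and coassociativity forces the prefixes of $\gamma(a)$ to be themselves values of $\gamma$ on their last elements. Hence $\gamma$ determines a forest order $\preceq_{\gamma}$ on $|\As|$ of height $\leq k$, where $b \preceq_{\gamma} a$ iff $b$ appears in $\gamma(a)$. The homomorphism property of $\gamma$ then forces every tuple in a relation of $\As$ to be a $\preceq_{\gamma}$-chain, which is exactly the defining condition of a tree-depth decomposition of depth $\leq k$ in the sense of Ne\v{s}et\v{r}il--Ossona de Mendez. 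Conversely, any such forest yields a coalgebra by sending $a$ to its ancestor chain. This gives $\td(\As) \leq k$ iff a $\Ek$-coalgebra on $\As$ exists.

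For the pebbling comonad $\Pk$, $\Pk \As$ consists of sequences of pairs $(p_i, a_i)$ with $p_i \in \{1,\ldots,k\}$, and relations are lifted only when the arguments share the property that one is a prefix of the other \emph{and} the relevant pebble is ``still live'' along the intervening suffix. A coalgebra $\gamma$ again induces a forest on $|\As|$ via the prefix ordering of $\gamma$-values; the sequence $\gamma(a)$ records a play in the $k$-pebble game whose active pebble contents at each position are the ``bag'' of the would-be tree decomposition. The plan is to define bags $B_v$ as the set of elements pebbled at node $v$ of this tree, verify $|B_v| \leq k$, and derive the running intersection property from coassociativity plus the constraint that related elements occur in a common bag from the homomorphism condition. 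Conversely, given a tree decomposition of width $< k$, one can traverse the tree in a depth-first manner, reusing pebble indices as bags change, to manufacture a coalgebra. This yields $\tw(\As) < k$ iff a $\Pk$-coalgebra on $\As$ exists.

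For the modal comonad $\Mk$, where objects are pointed Kripke structures and $\Mk(\As,a)$ is the tree of $R_{\alpha}$-paths of length $\leq k$ from $a$, a coalgebra on $(\As,a)$ is essentially a tree unfolding of $(\As,a)$ of depth $\leq k$, so the equivalence with modal unfolding depth is almost tautological once the comonad and coalgebra laws are spelled out.

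The hardest step is the pebbling/tree-width correspondence: from a coalgebra one must read off not just a tree but also a \emph{width-bounded} labelling, and verify the nontrivial running intersection property from the comonad axioms. Care is needed because the pebble-reuse encoding in $\Pk$ conflates two kinds of information (play order and pebble identity), and it is this double bookkeeping that has to be matched precisely against the combinatorial definition of a tree decomposition. The EF case is cleaner because sequences are simply ancestor chains, and the modal case is essentially by construction.
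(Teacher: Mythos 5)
Your treatment of the Ehrenfeucht-\Fraisse and pebbling cases is sound and is essentially the paper's own proof. For $\Ek$, the paper likewise reads a forest order off the two coalgebra laws (the comultiplication law gives prefix-closure of the image of the coalgebra, the counit law pins down last elements) and uses the homomorphism condition to make Gaifman-adjacent elements comparable, yielding a bijection between $\Ek$-coalgebras and forest covers of $\Gf(\As)$ of height $\leq k$, hence $\td(\As) = \cnE(\As)$. For $\Pk$ the difference is purely organisational: the paper factors your ``hardest step'' through an explicit intermediate notion, a \emph{$k$-pebble forest cover} (a forest cover with a pebbling function $p : V \to \kset$ such that $v \adj v'$ with $v \leq v'$ implies $p(v) \neq p(w)$ for all $w \in (v, v']$). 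Coalgebras biject with these covers (cited from \cite{abramsky2017pebbling}), and a separate combinatorial theorem identifies such covers with tree decompositions of width $< k$. Your ``bags of live pebbled elements'' are exactly the paper's sets of \emph{active predecessors}, and your depth-first traversal in the converse direction is what the paper implements via orderly (nice) tree decompositions, where at most one new vertex appears per node, so a free pebble index can be chosen inductively. So for these two comonads your route is the same proof, merely unfactored, and you correctly identify where the real work lies.

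The modal case is where your proposal has a genuine gap. A coalgebra $\gamma : (\As, a) \rarr \Mk(\As, a)$ is \emph{not} ``essentially a tree unfolding of $(\As, a)$'': the unfolding $\Mk(\As, a)$ exists for \emph{every} pointed structure, whereas a coalgebra exists only for very special ones, so on your reading every structure would have a finite modal coalgebra number, which is false. The content that must be proved is that the coalgebra laws force the structure to \emph{already be} tree-shaped. Concretely: since $\gamma$ lands in $\Mk(\As, a)$ and satisfies the counit law, every element must be reachable from $a$; since $\gamma$ is a homomorphism of pointed structures, $\gamma(a) = [a]$ and $\Ralph(a', a'')$ forces $\gamma(a'') = \gamma(a')[\alpha, a'']$, so by induction $\gamma(a')$ must equal the sequence of \emph{any} path from $a$ to $a'$; single-valuedness of $\gamma$ then gives uniqueness of paths, i.e.\ the submodel generated by $a$ is a synchronization tree, of height $\leq k$ because sequences in $\Mk(\As, a)$ have length $\leq k$. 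This is exactly why the paper states this part conditionally --- $\md(\As, a) = \cnM(\As, a)$ only for structures whose generated submodel is a synchronization tree of finite height, in contrast to the unconditional results for $\Ek$ and $\Pk$ --- a point the paper flags explicitly and your sketch, as worded, would miss.
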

The main idea behind these results is that coalgebras on $\As$ are in bijective correspondence with decompositions of $\As$ of the appropriate form. We thus obtain categorical characterizations of these key combinatorial parameters.

\section{Game Comonads}

In this section we will define the comonads corresponding to each of the forms of model comparison game we consider.

Firstly, a few notational preliminaries. A relational vocabulary $\sg$ is a set of relation symbols $R$, each with a specified positive integer arity.
A $\sg$-structure $\As$ is given by a set $A$, the universe of the structure, and for each $R$ in $\sg$ with arity $k$, a relation $\RA \subseteq A^k$. A homomorphism $h : \As \rarr \Bs$ is a function $h : A \rarr B$ such that, for each relation symbol $R$ of arity $k$ in $\sg$, for all $a_1, \ldots , a_k$ in $A$:
$\RA(a_1,\ldots , a_k) \IMP \RB(h(a_1), \ldots , h(a_k))$. We write $\CS$ for the category of $\sg$-structures and homomorphisms.

We shall write $\Alk$ for the set of non-empty sequences of length $\leq k$ on a set $A$. We use list notation $[a_1,\ldots , a_j]$ for such sequences, and indicate concatenation by juxtaposition. We write $s \pref t$ for the prefix ordering on sequences. If $s \preford t$, there is a unique $s'$ such that $ss' = t$, which we refer to as the suffix of $s$ in $t$. For each positive integer $n$, we define $\nset := \{ 1, \ldots , n\}$. 

We shall need a few notions on posets. The comparability relation on a poset  $(P, {\leq})$ is $x \comp y$ iff  $x \leq y$ or $y \leq x$.  A chain in a poset $(P, {\leq})$ is  a subset $C \subseteq P$ such that, for all $x, y \in C$, $x \comp y$. A \emph{forest} is a poset $(F, {\leq})$ such that, for all $x \in F$, the  set of predecessors $\da(x) \, := \, \{ y \in F \mid y \leq x\}$ is a finite chain. The height $\hgt(F)$ of a forest $F$ is $\max_{C} | C |$, where $C$ ranges over chains in $F$.

We recall that a comonad $(G, \varepsilon, \delta)$ on a category $\CC$ is given by a functor $G : \CC \rarr \CC$, and natural transformations $\varepsilon : G \Rightarrow I$ (the counit), and $\delta : G \Rightarrow G^2$ (the comultiplication), subject to the conditions that the following diagrams commute, for all objects $A$ of $\CC$:
%\[ 
\begin{center}
\begin{tikzcd}
G A \ar[r, "\delta_A"]  \ar[d, "\delta_A"']
& G G A \ar[d, "G \delta_A"] \\
G G A \ar[r, "\delta_{G A}"]  
& G G G A
\end{tikzcd}
$\qquad \qquad$
\begin{tikzcd}
G A \ar[r, "\delta_A"] \ar[rd, equal] \ar[d, "\delta_A"'] 
& G G A \ar[d, "G \epsA"] \\
G G A \ar[r, "\varepsilon_{GA}"] 
& G A
\end{tikzcd}
\end{center}
%\]

An equivalent formulation is  \emph{comonad in Kleisli form} \cite{manes2012algebraic}. This is given by an object map $G$, arrows $\varepsilon_A : GA \rarr A$ for every object $A$ of $\CC$, and a Kleisli coextension operation which takes $f : GA \rarr B$ to $f^* : GA \rarr GB$. These must satisfy the following equations:
\[ \varepsilon_{A}^* = \id_{\Gd A}, \qquad \varepsilon \circ f^* = f, \qquad (g \circ f^*)^* = g^* \circ f^* . \]
We can then extend $G$ to a functor by $\Gd f = (f \circ \varepsilon)^*$; and if we define the comultiplication $\delta : \Gd \Rightarrow \Gd^2$ by $\delta_{A} = \id_{GA}^*$, then $(\Gd, \varepsilon, \delta)$ is a comonad in the standard sense.
Conversely, given a comonad $(\Gd, \varepsilon, \delta)$, we can define the coextension by $f^* = Gf \circ \delta_A$.
This allows us to define the coKleisli category $\Kl(G)$, with objects the same as those of $\CC$, and morphisms from $A$ to $B$ given by the morphisms in $\CC$ of the form $GA \rarr B$. Kleisli composition of $f : GA \rarr B$ with $g : GB \rarr C$ is given by $g \Kcomp f \, := \, g \circ f^*$.

\subsection{The Ehrenfeucht-\Fraisse Comonad}

We shall define a comonad $\Ek$ on $\CS$ for each positive integer $k$. It will be convenient to define $\Ek$ in Kleisli form.
For each structure $\As$, we define a new structure $\Ek \As$, with universe $\Ek A \, := \, \Alk$. We define the map $\epsA : \Ek A \rarr A$ by $\epsA [a_1, \ldots , a_j ] = a_j$. For each relation symbol $R$ of arity $n$, we define $\REk$ to be the set of $n$-tuples $(s_1, \ldots , s_n)$ of sequences which are pairwise comparable in the prefix ordering, and such that $\RA(\epsA s_1, \ldots , \epsA s_n)$. Finally, we define the coextension. Given a homomorphism $f : \Ek \As \rarr \Bs$, we define $f^* : \Alk \rarr \Blk$ by $f^* [a_1, \ldots , a_j ] = [b_1, \ldots , b_j]$, where $b_i = f [a_1, \ldots , a_i]$, $1 \leq i \leq j$.

\begin{proposition}
The triple $(\Ek, \varepsilon, (\cdot)^*)$ is a comonad in Kleisli form.
\end{proposition}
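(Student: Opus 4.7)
The plan is to verify the three Kleisli equations $\varepsilon_A^* = \id_{\Ek A}$, $\varepsilon \circ f^* = f$, and $(g \circ f^*)^* = g^* \circ f^*$ by direct calculation on sequences, after first checking the non-trivial fact that the coextension $f^*$ is indeed a homomorphism of $\sg$-structures.

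The first task is well-definedness: given a homomorphism $f : \Ek \As \rarr \Bs$, I need to show $f^* : \Ek \As \rarr \Ek \Bs$ is a homomorphism. The key observation is that $f^*$ preserves the prefix order strictly: if $s \pref t$ in $\Alk$, then $f^* s \pref f^* t$ in $\Blk$, because $f^*$ acts on a sequence by applying $f$ to each of its prefixes in turn, and the prefixes of a prefix of $t$ are among the prefixes of $t$. Consequently, if $(s_1, \ldots, s_n)$ is a pairwise comparable tuple, so is $(f^* s_1, \ldots, f^* s_n)$. Moreover, by construction the last element of $f^* s_i$ is exactly $f(s_i)$, i.e.\ $\varepsilon_{\Bs} \circ f^* = f$ on elements, so the relational condition $\RB(\varepsilon_{\Bs} f^* s_1, \ldots, \varepsilon_{\Bs} f^* s_n)$ reduces to $\RB(f s_1, \ldots, f s_n)$, which holds because $f$ is a homomorphism from $\Ek \As$ to $\Bs$.

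The three Kleisli equations are then routine computations on lists. For $\varepsilon_A^*$, unfolding the definition gives $\varepsilon_A^* [a_1, \ldots, a_j] = [\varepsilon_A[a_1], \ldots, \varepsilon_A[a_1, \ldots, a_j]] = [a_1, \ldots, a_j]$, so $\varepsilon_A^* = \id$. For the second equation, the observation already made in the previous paragraph, $\varepsilon_{\Bs} \circ f^* = f$, is exactly what is required. For the associativity-style equation, let $f : \Ek \As \rarr \Bs$ and $g : \Ek \Bs \rarr \Cs$, and set $b_i = f[a_1, \ldots, a_i]$; then both $(g \circ f^*)^*[a_1, \ldots, a_j]$ and $(g^* \circ f^*)[a_1, \ldots, a_j]$ have $i$-th coordinate equal to $g[b_1, \ldots, b_i]$, the first by definition of coextension applied to $g \circ f^*$ and the second by applying $g^*$ to the list $[b_1, \ldots, b_j] = f^*[a_1, \ldots, a_j]$.

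I expect the only genuine content to be the homomorphism check for $f^*$, which rests on the prefix-preservation property; the three equational verifications are direct once one writes $f^*$ coordinatewise as the list of values of $f$ on initial segments. Pairwise comparability of tuples in $\REk$ is the feature of the definition of $\Ek \As$ that makes prefix-preservation suffice, and this is the design choice to highlight before the calculations.
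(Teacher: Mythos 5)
Your proof is correct; the paper states this proposition without proof, treating it as routine, and your argument (prefix-preservation giving the homomorphism property of $f^*$, followed by coordinatewise verification of the three Kleisli equations) is exactly the standard verification. The only point you leave implicit is that $\epsA : \Ek \As \rarr \As$ is itself a homomorphism, which is needed for the data to live in $\CS$ at all, but this is immediate since the definition of $\REk$ builds in the condition $\RA(\epsA s_1, \ldots , \epsA s_n)$.
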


Intuitively, an element of $\Alk$ represents a play in $\As$ of length $\leq k$. A coKleisli morphism $\Ek \As \rarr \Bs$ represents a Duplicator strategy for the existential Ehrenfeucht-\Fraisse game with $k$ rounds, where Spoiler plays only in $\As$, and $b_i = f [a_1, \ldots , a_i]$ represents Duplicator's response in $\Bs$ to the $i$'th move by Spoiler. The winning condition for Duplicator in this game is that, after $k$ rounds have been played, 
%for $1 \leq j \leq k$, 
the induced relation $\{ (a_i, b_i) \mid 1 \leq i \leq k \}$ is a partial homomorphism from $\As$ to $\Bs$.
%%
%% Bibliography
%%

%% Please use bibtex, 
These intuitions are confirmed by the following result.
\begin{theorem}
\label{EFgamethm}
The following are equivalent:
\begin{enumerate}
\item There is a homomorphism $\Ek \As \rarr \Bs$.
\item Duplicator has a winning strategy for the existential Ehrenfeucht-\Fraisse game with $k$ rounds, played from $\As$ to $\Bs$.
%\item For every existential positive sentence $\vphi$ with quantifier rank $\leq k$, $\As \models \vphi \IMP \Bs \models \vphi$.
\end{enumerate}
\end{theorem}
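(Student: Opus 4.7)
The plan is to exhibit a direct bijection between coKleisli morphisms $f : \Ek \As \rarr \Bs$ and Duplicator strategies $\sigma$ for the existential $k$-round EF game from $\As$ to $\Bs$, where a strategy assigns to each history of Spoiler's moves $(a_1,\ldots,a_j)$ with $1 \le j \le k$ a Duplicator response $b_j \in B$. The correspondence is simply $\sigma(a_1,\ldots,a_j) := f[a_1,\ldots,a_j]$; the nontrivial content is checking that ``$f$ is a homomorphism'' matches ``$\sigma$ is winning'' under this translation.

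For $(1) \Rightarrow (2)$, define $\sigma$ from $f$ as above and verify the winning condition for an arbitrary complete play $a_1,\ldots,a_k$ with responses $b_i = f[a_1,\ldots,a_i]$. Given any relation symbol $R$ of arity $n$ and indices $i_1,\ldots,i_n$ with $\RA(a_{i_1},\ldots,a_{i_n})$, set $s_j := [a_1,\ldots,a_{i_j}]$. These sequences are pairwise prefix-comparable (all are prefixes of $[a_1,\ldots,a_k]$) and satisfy $\epsA(s_j) = a_{i_j}$, so $(s_1,\ldots,s_n) \in \REk$ by definition of the relations on $\Ek\As$. Applying the homomorphism $f$ yields $\RB(b_{i_1},\ldots,b_{i_n})$, which is the relation-preservation part of the partial homomorphism condition.

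For $(2) \Rightarrow (1)$, define $f[a_1,\ldots,a_j] := \sigma(a_1,\ldots,a_j)$ and check that $f$ preserves every relation. Suppose $(s_1,\ldots,s_n) \in \REk$. Pairwise prefix-comparability forces the $s_j$ to lie on a single chain; let $[a_1,\ldots,a_m]$ be the longest, so $s_j = [a_1,\ldots,a_{i_j}]$ for some $i_j \le m \le k$, and the assumption $\RA(\epsA s_1,\ldots,\epsA s_n)$ becomes $\RA(a_{i_1},\ldots,a_{i_n})$. Now consider the play in which Spoiler plays $a_1,\ldots,a_m$ in the first $m$ rounds and any elements in the remaining $k-m$ rounds; Duplicator's responses on the first $m$ rounds are exactly $f(s_1),\ldots,f(s_{\max})$-style values, and the winning condition gives $\RB(f(s_1),\ldots,f(s_n))$.

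The main conceptual step, and the only place that really uses the specific design of $\Ek$, is the observation that the pairwise prefix-comparability built into $\REk$ is precisely what collapses any relation-witnessing tuple of sequences onto a single linear play, matching the single history a Duplicator strategy sees. The only mild bookkeeping obstacle is on the $(2) \Rightarrow (1)$ side, where one must extract relation preservation from a possibly short tuple ($m \le k$) by padding Spoiler's play to length $k$ and invoking the winning condition; this is routine provided the strategy is defined on all histories of length $\le k$, which matches our definition of $f$ on $\Alk$.
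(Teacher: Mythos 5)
Your proof is correct, and it takes the same route the paper intends: the paper states Theorem~\ref{EFgamethm} without a written-out proof, relying on the identification $\sigma(a_1,\ldots,a_j) = f[a_1,\ldots,a_j]$ between coKleisli morphisms $\Ek \As \rarr \Bs$ and Duplicator strategies sketched just before the statement, which is exactly the correspondence you verify. Your two directions (using prefix-comparability to collapse relation-witnessing tuples onto a single play, and padding a short play to $k$ rounds to invoke the winning condition) are precisely the routine checks the paper leaves implicit.
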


\subsection{The Pebbling Comonad}

We now turn to the case of pebble games. The following construction appeared in \cite{abramsky2017pebbling}.
Given a structure $\As$, we define $\Tk \As$, which will represent plays of the $k$-pebble game on $\As$.\footnote{In \cite{abramsky2017pebbling} we used the notation $\mathbb{T}_k$ for this comonad.} The universe is  $(\kset \times A)^{+}$, the set of finite non-empty sequences of moves $(p, a)$, where $p \in \kset$ is a pebble index, and $a \in A$. We shall use the notation $s = [ (p_1, a_1), \ldots , (p_n, a_n)]$ for these sequences, which may be of arbitrary length. Thus the universe of $\Tk \As$ is always infinite, even if $\As$ is a finite structure. This is unavoidable, by 
\cite[Theorem 7]{abramsky2017pebbling}. We  define $\epsA : \Tk A \rarr A$ to send a play  $[ (p_1, a_1), \ldots , (p_n, a_n)]$ to $a_n$, the $A$-component of its last move.

Given an $n$-ary relation $R \in \sg$, we define $R^{\Tk \As}(s_1, \ldots , s_n)$ iff (1) the $s_i$ are pairwise comparable in the prefix ordering;
%, and hence form a chain with greatest element $s = s_j$ for some $j$; 
(2) the pebble index of the last move in each $s_i$ does not appear in the suffix of $s_i$ in $s_j$ for any $s_j \prefgt s_i$; and (3) $R^{A}(\epsA(s_1), \ldots , \epsA(s_n))$.

Finally, given a homomorphism $f : \Tk \As \rarr \Bs$, we define $f^* : \Tk A \rarr \Tk B$ by \\
%$f^* [(p,a)] = [(p, f [(p,a)])]$, $f^* (s[(p,a)]) = f^* (s) [ (p, f (s[(p,a)]))]$.
$f^*  [ (p_1, a_1), \ldots , (p_j, a_j)] = [(p_1, b_1), \ldots , (p_j, b_j)]$,
where $b_i = f [(p_1, a_1), \ldots , (p_i, a_i)]$, $1 \leq i \leq j$.

\begin{proposition}
The triple $(\Tk, \varepsilon, (\cdot)^*)$ is a comonad in Kleisli form.
\end{proposition}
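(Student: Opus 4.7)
The plan is to verify the three Kleisli-form comonad equations
\[
\varepsilon_A^* = \id_{\Tk A}, \qquad \varepsilon_B \circ f^* = f, \qquad (g \circ f^*)^* = g^* \circ f^* ,
\]
together with the fact that for any homomorphism $f : \Tk \As \rarr \Bs$, the coextension $f^* : \Tk \As \rarr \Tk \Bs$ is itself a homomorphism. The crucial structural observation, which I will record up front, is that $f^*$ acts \emph{componentwise on the pebble-index column}: it sends $[(p_1,a_1),\ldots,(p_j,a_j)]$ to $[(p_1,b_1),\ldots,(p_j,b_j)]$, so the list of pebble indices and the length of the play are untouched, and $f^*$ commutes with taking prefixes, in the sense that the $i$-th prefix of $f^*(s)$ equals $f^*$ applied to the $i$-th prefix of $s$.

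For the three equations I would just unfold the definitions. For $\varepsilon_A^*$, the $i$-th entry of the image is $\varepsilon_A[(p_1,a_1),\ldots,(p_i,a_i)] = a_i$, giving back the original sequence. For $\varepsilon_B \circ f^* = f$, evaluating $\varepsilon_B$ on $f^*(s)$ extracts the last $B$-component, which is by definition $f(s)$. For $(g \circ f^*)^* = g^* \circ f^*$, applying both sides to $s = [(p_1,a_1),\ldots,(p_j,a_j)]$, the $i$-th $C$-component on the left is $g(f^*[(p_1,a_1),\ldots,(p_i,a_i)])$, and on the right it is $g$ applied to the $i$-th prefix of $f^*(s)$; by the prefix-commutation observation above, these agree. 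These are straightforward but I would state the prefix-commutation lemma once and reuse it.

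The one step that actually has content, and which I expect to be the main obstacle, is showing that $f^*$ preserves the relation $R^{\Tk}$ for every $R \in \sg$, since $R^{\Tk}$ bundles together three conditions of quite different flavours. Suppose $R^{\Tk \As}(s_1,\ldots,s_n)$. Condition (1), pairwise prefix-comparability of the $s_i$, transfers to $f^*(s_1),\ldots,f^*(s_n)$ because $f^*$ is length-preserving and commutes with prefixes, so if $s_i \pref s_j$ then the $|s_i|$-prefix of $f^*(s_j)$ equals $f^*(s_i)$. Condition (2), that the pebble index of the last move of $s_i$ does not reappear in the suffix of $s_i$ in any $s_j \prefgt s_i$, depends only on the pebble-index column, which $f^*$ leaves untouched, so it is preserved verbatim. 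Condition (3), $R^{\Bs}(\varepsilon_B f^*(s_1),\ldots,\varepsilon_B f^*(s_n))$, reduces via the already-verified equation $\varepsilon_B \circ f^* = f$ to $R^{\Bs}(f(s_1),\ldots,f(s_n))$, which holds because $f$ is itself a homomorphism $\Tk \As \rarr \Bs$ and $R^{\Tk \As}(s_1,\ldots,s_n)$ was assumed.

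Finally, I would remark that once $f^*$ is shown to be a morphism in $\CS$ and the three equations verified, the standard translation recalled earlier in the excerpt yields the functor, the comultiplication $\delta_A = \id_{\Tk A}^*$, and the comonad axioms for free, so no further diagrammatic work is needed.
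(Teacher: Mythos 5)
Your proof is correct. The paper states this proposition without proof (deferring to \cite{abramsky2017pebbling}, where the construction first appeared), and your direct verification is exactly the intended argument: the three Kleisli equations follow from the prefix-commutation observation, and the only step with real content --- that $f^*$ preserves $R^{\Tk \As}$ --- is handled correctly by noting that $f^*$ leaves the pebble-index column and prefix structure untouched (so conditions (1) and (2) transfer verbatim) while condition (3) reduces via $\varepsilon \circ f^* = f$ to the homomorphism property of $f$.
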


The following is  \cite[Theorem 13]{abramsky2017pebbling}.
\begin{theorem}
\label{strmorth}
The following are equivalent:
\begin{enumerate}
\item There is a homomorphism $\Tk \As \rarr \Bs$.
\item There is a winning strategy for Duplicator in the existential $k$-pebble game from $\As$ to $\Bs$.
\end{enumerate}
\end{theorem}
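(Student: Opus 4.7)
The proof runs along very similar lines to Theorem~\ref{EFgamethm}, with the novelty being that one must match the somewhat intricate condition (2) in the definition of $R^{\Tk \As}$ (``the pebble index of the last move in each $s_i$ does not appear in the suffix of $s_i$ in $s_j$'') against the natural game-theoretic notion of ``pebble $p$ is still resting on element $a$ at the end of the play.'' I would set up this dictionary first: given a play $s = [(p_1,a_1), \ldots, (p_n, a_n)] \in \Tk A$, pebble index $p$ sits on element $a_i \in A$ at the end of $s$ precisely when $p_i = p$ and $p_j \neq p$ for all $j > i$; equivalently, the prefix $s_{(i)} := [(p_1, a_1), \ldots, (p_i, a_i)]$ has the property that $p_i$ does not appear in the suffix from $s_{(i)}$ to $s$.

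For the direction $(1) \Rightarrow (2)$, given a homomorphism $f : \Tk \As \rarr \Bs$, I define Duplicator's strategy in the obvious way: if Spoiler has so far generated the play $s \in \Tk A$ and now places pebble $p$ on $a$, extending the play to $s' := s \cdot [(p,a)]$, then Duplicator responds with $f(s') \in B$. To verify that this is a winning strategy, I fix any reachable position and let $s \in \Tk A$ denote the play so far. The pebbled pairs $\{(a_{i_1}, b_{i_1}), \ldots, (a_{i_m}, b_{i_m})\}$ (with $i_1 < \cdots < i_m$ the ``latest occurrence'' indices of the pebbles currently on the board) must form a partial homomorphism. So suppose $R^{\As}(a_{j_1}, \ldots, a_{j_n})$ for some subsequence of these latest-occurrence indices. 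The prefixes $s_{(j_l)}$ are pairwise comparable (they are all prefixes of $s$), satisfy condition (2) by the dictionary above, and their images under $\epsA$ are the $a_{j_l}$. Hence $R^{\Tk \As}(s_{(j_1)}, \ldots, s_{(j_n)})$ holds, and applying $f$ gives $R^{\Bs}(b_{j_1}, \ldots, b_{j_n})$ as required.

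For the direction $(2) \Rightarrow (1)$, given a winning strategy $\sigma$ for Duplicator, I define $f : \Tk A \rarr B$ by letting $f(s)$ be Duplicator's response in $\Bs$ to Spoiler's final move when $\sigma$ is followed against Spoiler's play $s$. To check that $f$ is a homomorphism, suppose $R^{\Tk \As}(s_1, \ldots, s_n)$. Since the $s_i$ are pairwise comparable, there is a longest one, say $s$. Now consider the position reached after Spoiler plays $s$ against $\sigma$: by condition (2) in the definition of $R^{\Tk \As}$ and the dictionary above, the pebble placed at the last move of each $s_i$ is still on the board at the end of $s$, resting on $\epsA(s_i)$ in $\As$ and on $f(s_i)$ in $\Bs$. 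Condition (3) gives $R^{\As}(\epsA(s_1), \ldots, \epsA(s_n))$, so the winning condition for Duplicator supplies $R^{\Bs}(f(s_1), \ldots, f(s_n))$.

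The main obstacle is simply keeping the bookkeeping of condition (2) straight; once the dictionary between prefix-suffix structure and ``pebble still on the board'' is made explicit, both directions are essentially immediate. Since this result is already established as \cite[Theorem~13]{abramsky2017pebbling}, I would in practice just refer the reader there, but the sketch above records the core of the argument.
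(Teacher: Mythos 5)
Your proposal is correct and matches the paper's approach: the paper itself gives no argument for Theorem~\ref{strmorth}, simply citing \cite[Theorem 13]{abramsky2017pebbling}, which is exactly what you defer to. Your sketch — the dictionary between condition (2) of $R^{\Tk \As}$ and ``pebble still on the board,'' the strategy $s' \mapsto f(s')$ in one direction, and reading $f$ off a (deterministic) winning strategy in the other — is a faithful rendering of the standard argument behind that cited theorem, so there is nothing to object to.
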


\subsection{The Modal Comonad}
 
For the modal case, we assume that the relational vocabulary $\sg$ contains only symbols of arity at most 2. We can thus regard a $\sigma$-structure as a Kripke structure for a multi-modal logic, where the universe is thought of as a set of worlds, each binary relation symbol $\Ralph$ gives the accessibility relation for one of the modalities, and each unary relation symbol $P$ give the valuation for a corresponding propositional variable. If there are no unary symbols, such structures are exactly the labelled transition systems widely studied in concurrency \cite{milner1989communication}.

Modal logic localizes its notion of satisfaction in a structure to a world. We shall reflect this by using the category of \emph{pointed relational structures} $\CSp$. Objects of this category are pairs $(\As, a)$ where $\As$ is a $\sg$-structure and $a \in A$. Morphisms $h : (\As, a) \rarr (\Bs, b)$ are homomorphisms $h : \As \rarr \Bs$ such that $h(a) = b$. Of course, the same effect could be achieved by expanding the vocabulary $\sg$ with a constant, but pointed categories appear in many mathematical contexts.

For each $k >0$, we shall define a comonad $\Mk$, where $\Mk (\As, a)$ corresponds to unravelling the structure $\As$, starting from $a$, to depth $k$. 
%We shall use the notation $\Mka \As$ for $\Mk (\As, a)$. 
The universe of $\Mk (\As, a)$ comprises the unit sequence $[a]$, which is the distinguished element,  together with all sequences of the form $[a_0, \alpha_1, a_1, \ldots , \alpha_j, a_{j}]$, where $a = a_0$, $1 \leq j  \leq k$, and $\RA_{\alpha_i}(a_i, a_{i+1})$, $0 \leq i < j$. The map $\epsA : \Mk (A, a) \rarr  (A, a)$ sends a sequence to its last element. Unary relation symbols $P$ are interpreted by $\PMA(s)$ iff $\PA(\epsA s)$. For binary relations $\Ralph$, the interpretation is $\RMA_{\alpha}(s,t)$ iff for some $a' \in A$, $t = s[\alpha, a']$.
 Given a morphism $f : \Mk (\As, a) \rarr (\Bs, b)$, we define $f^* : \Mk (\As, a) \rarr \Mk (\Bs, b)$ recursively by $f^*[a] = [b]$,
 $f^*(s[\alpha, a']) = f^*(s)[\alpha, b']$ where $b' = f(s[\alpha, a'])$.
 %$f^* [a_1, \alpha_1, \ldots , a_j, \alpha_j, a_{j+1}]= [b_1, \alpha_1, \ldots , b_j, \alpha_j, b_{j+1}]$, where $b_i = f [a_1, \alpha_1, \ldots , a_i], \alpha_i, a_{i+1}]$, $1 \leq i < j$. 
 This is well-defined since $f$ is a morphism by assumption.

\begin{proposition}
The triple $(\Mk, \varepsilon, (\cdot)^*)$ is a comonad in Kleisli form on $\CSp$.
\end{proposition}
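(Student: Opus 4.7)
The plan is to verify the three Kleisli-form equations together with the assertion that $f^{*}$ is indeed a morphism in $\CSp$, essentially by induction on the length of sequences in $\Mk(\As,a)$, since every element of the universe is either the distinguished unit sequence $[a]$ or of the form $s[\alpha,a']$ for a shorter sequence $s$.

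First I would check that $f^{*}: \Mk(\As,a) \rarr \Mk(\Bs,b)$ is well-defined as a pointed morphism. The recursion clauses only build legal sequences of $\Mk(\Bs,b)$: starting from $[b]$ (so the basepoint is preserved), and extending $f^{*}(s)$ by $[\alpha, b']$ where $b' = f(s[\alpha,a'])$; the fact that $b'$ is $\Ralph$-accessible from the last element of $f^{*}(s)$ follows because $f$ is a morphism in $\CSp$ and $s \labar{\alpha} s[\alpha,a']$ in $\Mk(\As,a)$. Preservation of unary relations $P$ reduces, via the definition $\PMA(s) \IFF \PA(\epsA s)$, to the fact that $f$ preserves $P$ and that (as we shall verify in the next step) the last element of $f^{*}(s)$ is $f(s)$. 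Preservation of binary accessibility relations $\Ralph$ is immediate from the recursion clause.

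Next I would verify the three Kleisli identities, each by straightforward induction on the sequence structure. For $\varepsilon_{(\As,a)}^{*} = \id$: the base case is $\varepsilon^{*}[a]=[a]$, and the inductive step uses $\varepsilon^{*}(s[\alpha,a']) = \varepsilon^{*}(s)[\alpha,\varepsilon(s[\alpha,a'])] = s[\alpha,a']$ by the induction hypothesis and $\varepsilon(s[\alpha,a']) = a'$. For $\varepsilon \circ f^{*} = f$: the base case gives $\varepsilon(f^{*}[a]) = \varepsilon[b] = b = f[a]$ (the second equality because $f$ is pointed), and the step gives $\varepsilon(f^{*}(s[\alpha,a'])) = \varepsilon(f^{*}(s)[\alpha,f(s[\alpha,a'])]) = f(s[\alpha,a'])$. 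For the associativity law $(g \circ f^{*})^{*} = g^{*} \circ f^{*}$: both sides agree on $[a]$, and in the step both expand, using the recursion clause and the induction hypothesis, to $g^{*}(f^{*}(s))[\alpha, g(f^{*}(s[\alpha,a']))]$.

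The only mildly delicate point — and the one I would handle carefully — is the interplay between the recursive definition of $f^{*}$ and the verification that it respects unary predicates on every sequence, not merely on $[a]$; but once the identity $\varepsilon \circ f^{*} = f$ is in hand, this is immediate, so I would in fact prove $\varepsilon \circ f^{*} = f$ first and then deduce preservation of unary relations from it. Everything else is routine sequence induction, exactly parallel to the corresponding verifications for $\Ek$ and $\Tk$.
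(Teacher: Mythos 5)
Your proof is correct, and it is essentially the intended argument: the paper states this proposition without proof (flagging only, via ``this is well-defined since $f$ is a morphism by assumption,'' the one delicate point you identified), and your routine sequence induction verifying the three Kleisli equations is exactly what is being omitted as routine. Your handling of the interplay between well-definedness of $f^{*}$ and the law $\varepsilon \circ f^{*} = f$ --- proving them together by induction, so that the last element of $f^{*}(s)$ is known to be $f(s)$ when the next step of the sequence is appended --- is precisely the right way to discharge that point.
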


%\noindent Note that we can  lift this comonad to  $\CS$ by defining $\Mk \As \, := \, \coprod_{a \in A} \Mk (\As, a)$. 

We recall the notion of \emph{simulation} between Kripke structures \cite{blackburn2002modal}. Given structures $\As$, $\Bs$, we define relations $\simord_k \; \subseteq \; A \times B$, $k \geq 0$, by induction on $k$: $\simord_0 \; = A\;  \times B$, and $a \simord_{k+1} b$ iff (1) for all unary $P$, $\PA(a)$ implies $\PB(b)$, and (2) for all $\Ralph$, if $\RaA(a, a')$, then for some $b'$, $\RaB(b, b')$ and $a' \simord_k b'$.
It is standard that these relations are equivalently formulated in terms of a modified existential Ehrenfeucht-\Fraisse game \cite{blackburn2002modal,gradel2014freedoms}.

\begin{theorem}
\label{simthm}
Let $\As$, $\Bs$ be Kripke structures, with $a \in A$ and $b \in B$, and $k>0$.
The following are equivalent:
\begin{enumerate}
\item There is a homomorphism $f : \Mk (\As, a) \rarr (\Bs, b)$.
\item $a \simord_k b$.
\item There is a winning strategy for Duplicator in the $k$-round simulation game from $(\As, a)$ to $(\Bs, b)$.
\end{enumerate}
\end{theorem}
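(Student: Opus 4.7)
The plan is to prove $(1) \Leftrightarrow (2)$ directly, since the equivalence $(2) \Leftrightarrow (3)$ between the $k$-round simulation game and the relations $\simord_k$ is recorded in the paragraph preceding the theorem as standard. Both directions of $(1) \Leftrightarrow (2)$ proceed by induction on $k$, exploiting the tree structure of $\Mk(\As, a)$ as the depth-$k$ unravelling of $\As$ from $a$.

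For $(1) \Rightarrow (2)$: given $f : \Mk(\As, a) \rarr (\Bs, b)$, preservation of unary predicates at the root yields the first clause of $\simord_k$, since $\PMA([a]) \IFF \PA(a)$ and $f([a]) = b$. For the forth clause, if $\RaA(a, a')$ then $[a, \alpha, a'] \in \Mk(\As, a)$ with $\RMA_\alpha([a], [a, \alpha, a'])$; setting $b' := f([a, \alpha, a'])$ gives $\RaB(b, b')$. To reduce the index on $\simord$, I introduce the prepending map $\phi : \MM_{k-1}(\As, a') \rarr \Mk(\As, a)$ sending $[a', \alpha_1, a_1, \ldots, \alpha_j, a_j]$ to $[a, \alpha, a', \alpha_1, a_1, \ldots, \alpha_j, a_j]$. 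Because $\phi$ commutes with right extensions by $[\beta, a'']$ and preserves the last component, it is a homomorphism of the underlying $\sg$-structures. Hence $f \circ \phi : \MM_{k-1}(\As, a') \rarr (\Bs, b')$ is a pointed morphism, and the inductive hypothesis supplies $a' \simord_{k-1} b'$. The base case $k = 1$ is immediate since $\simord_0$ is the total relation.

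For $(2) \Rightarrow (1)$: given $a \simord_k b$, I define $f$ by recursion on path length $j$, maintaining the invariant that whenever $f$ has been defined on $s = [a, \alpha_1, a_1, \ldots, \alpha_j, a_j]$ with value $b_j$, we have $a_j \simord_{k-j} b_j$. Set $f([a]) := b$, which is consistent with $a \simord_k b$. For the step, if $j < k$ and $\RaA(a_j, a')$, apply the forth clause of $\simord_{k-j}$ to pick some $b'$ with $\RaB(b_j, b') \AND a' \simord_{k-j-1} b'$, and set $f(s[\alpha, a']) := b'$. Preservation of unary predicates by $f$ follows from clause (1) of $\simord_{k-j}$ applied to the invariant, and preservation of each $R_\alpha$ is built into the construction.

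The main obstacle is bookkeeping: the invariant tying the index of $\simord$ to the remaining depth $k - j$ must be kept tightly in sync with both the recursive definition of $f$ in $(2) \Rightarrow (1)$ and the inductive step of $(1) \Rightarrow (2)$. Once this invariant is fixed, both directions are essentially mechanical. In the infinite case, $(2) \Rightarrow (1)$ requires dependent choice to select the $b'$ at each step; for finite structures it is a genuine finite recursion, yielding an entirely constructive equivalence.
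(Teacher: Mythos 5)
The paper states this theorem without proof (only the equivalence of (2) and (3) is attributed to the standard literature), so your proposal has to be judged on its own merits. Your direction $(1) \Rightarrow (2)$ is sound: the prepending map $\phi : \MM_{k-1}(\As, a') \rarr \Mk(\As, a)$ is well defined, is a homomorphism of the underlying $\sg$-structures, $f \circ \phi$ is pointed at $(\Bs, b')$, and treating $k = 1$ as the base case avoids the undefined $\MM_0$. This direction only uses that the homomorphism condition is at least as strong as the simulation clauses, and it is.

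Your direction $(2) \Rightarrow (1)$, however, has a genuine gap, located exactly where you dismiss the issue as ``bookkeeping''. Your invariant gives $a_j \simord_{k-j} b_j$ at a sequence of length $j$, so at sequences of maximal length $j = k$ it gives only $a_k \simord_0 b_k$, and with the paper's definition $\simord_0 \, = \, A \times B$ this carries no information whatsoever about unary predicates. But $f$ must preserve unary predicates at those leaves too, since $\PMA(s)$ iff $\PA(\epsA s)$ holds for \emph{every} $s$ in the universe of $\Mk(\As, a)$, including sequences of length $k$. Your sentence ``preservation of unary predicates by $f$ follows from clause (1) of $\simord_{k-j}$'' is only meaningful when $k - j \geq 1$: the relation $\simord_0$ has no clause (1). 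Moreover, no cleverer invariant can close this gap, because under the stated definitions the implication $(2) \Rightarrow (1)$ is false. Take $\sg = \{P, R_{\alpha}\}$, $k = 1$, $\As$ with universe $\{a, a'\}$, $\RaA = \{(a, a')\}$, $\PA = \{a'\}$, and $\Bs$ with universe $\{b, b'\}$, $\RaB = \{(b, b')\}$, $\PB = \es$. Then $a \simord_1 b$ (clause (1) is vacuous at $a$, and the forth clause only requires \emph{some} $\alpha$-successor, with $\simord_0$ total), yet any homomorphism $\MM_1(\As, a) \rarr (\Bs, b)$ would have to send $[a, \alpha, a']$ to an $\alpha$-successor of $b$ satisfying $P$, and there is none. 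The repair lies in the definition rather than in your recursion: with the standard convention that $a \simord_0 b$ iff $\PA(a)$ implies $\PB(b)$ for all unary $P$ (equivalently, requiring atomic harmony at every position of the simulation game, including the final one), your invariant does yield unary preservation at the leaves, and your construction then goes through. A careful write-up should surface this off-by-one mismatch explicitly instead of burying it, since it is precisely the point where the theorem's statement and any proof of it interact nontrivially.
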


\section{Logical Equivalences}

We now show how our game comonads can be used to give syntax-free characterizations of a range of logical equivalences, which play a central r\^ole in finite model theory and modal logic.

We shall be considering logics $\LL$ which arise as fragments of $\Linf$, the extension of first-order logic with infinitary conjunctions and disjunctions, but where formulas contain only finitely many variables. In particular, we will consider the fragments $\Lk$, of formulas with quantifier rank $\leq k$, and $\Lvk$, the $k$-variable fragment. These play a fundamental r\^ole in finite model theory.

We shall also consider two variants for each of these fragments $\LL$. One is the existential positive fragment $\exists\LL$, which contains only those formulas of $\LL$ built using existential quantifiers, conjunction and disjunction. The other is $\Lc$, the extension of $\LL$ with counting quantifiers. These have the form $\elen$, $\egen$, where the semantics of $\As \models \egen x. \, \psi$ is that there exist at least $n$ distinct elements of $A$ satisying $\psi$.
%; $\elen$ is dual.

Each of these logics $\LL$ induces an equivalence on structures in $\CS$:
\[ \As \eqL \Bs \iffdef \forall \vphi \in \LL. \; \As \models \vphi \; \IFF \; \Bs \models \vphi . \]
Our aim is to characterize these equivalences in terms of our game comonads, and more specifically, to use morphisms in the coKleisli categories as witnesses for these equivalences.

Two equivalences can be defined uniformly for any indexed family of comonads $\Ck$:
\begin{itemize}
\item $\As \eqaCk \Bs$ iff there are coKleisli morphisms $\Ck \As \rarr \Bs$ and $\Ck \Bs \rarr \As$. Note that there need be no relationship between these morphisms.
This is simply the equivalence induced by the preorder collapse of the coKleisli category.
\item $\As \eqcCk \Bs$ iff $\As$ and $\Bs$ are isomorphic in the coKleisli category $\Kl(\Ck)$. This means that there are morphisms $\Ck \As \rarr \Bs$ and $\Ck \Bs \rarr \As$ which are inverses of each other in $\Kl(\Ck)$.
\end{itemize}
Clearly, $\eqcCk$ strictly implies $\eqaCk$.

We shall also define an intermediate, ``back-and-forth'' equivalence $\eqbCk$. This will be more specific to ``game comonads'' defined on a concrete category such as $\CS$, but it will still be defined and shown to have the appropriate properties in considerable generality.
We assume that for each structure $\As$, the universe $\Ck A$ has a forest order $\preford$, as seen in our concrete constructions using the prefix ordering on sequences.
We add a root $\bot$ for convenience.
We write the covering relation for this order as $\cvr$; thus $s \cvr t$ iff $s \preford t$, $s \neq t$, and for all $u$, $s \preford u \preford t$ implies $u = s$ or $u = t$.
We shall also assume that, for any coKleisli morphism $f : \Ck \As \to \Bs$, the Kleisli coextension preserves the covering relation: $s \cvr s'$ implies $f^*(s) \cvr f^*(s')$.

The definition will be parameterized on a set $\WAB \, \subseteq \, \Ck A \times \Ck B$ of ``winning positions'' for each pair of structures $\As$, $\Bs$.
We assume that a function $f : \Ck A \to B$ such that, for all $s \in \Ck A$, $(s, f^*(s)) \in \WAB$, is a coKleisli morphism.

We define the back-and-forth $\Ck$ game between $\As$ and $\Bs$ as follows.

At the start of each round of the game, the position is specified by $(s, t) \in \Ck A \times \Ck B$. The initial position is $(\bot, \bot)$.  The round proceeds as follows. Either Spoiler chooses some $s' \rcvr s$, and Duplicator responds with $t' \rcvr t$, resulting in a new position $(s', t')$; or Spoiler chooses some $t'' \rcvr t$ and Duplicator responds with $s'' \rcvr s$, resulting in $(s'',t'')$. Duplicator wins the round if the new position is in $\WAB$.

We can then define $\SAB$ to be the set of all functions $f : \Ck A \to B$ such that, for all $s \in \Ck A$, $(s, f^*(s)) \in \WAB$.

We define a \emph{locally invertible pair} $(F, G)$ from $\As$ to $\Bs$ to be a pair of sets $F \subseteq \SAB$, $G \subseteq \SBA$, satisfying the following conditions:
\begin{enumerate}
\item For all $f \in F$, $s \in \Ck A$, for some $g \in G$, $g^* f^*(s) = s$.
\item For all $g \in G$, $t \in \Ck B$, for some $f \in F$, $f^* g^*(t) = t$.
\end{enumerate}
%Note that $F = \es$ iff $G = \es$.
We define $\As \eqbCk \Bs$ iff there is a non-empty locally invertible pair from $\As$ to $\Bs$.

\begin{proposition}
\label{pisoprop}
The following are equivalent:
\begin{enumerate}
\item $\As \eqbCk \Bs$.
\item There is a winning strategy for Duplicator in the $\Ck$ game between $\As$ and $\Bs$.
%\item There is a sequence $I_0, I_1, \ldots , I_k$ of non-empty sets of partial isomorohisms from $\As$ to $\Bs$ satisfying the back-and-forth property  \cite{ebbinghaus2005finite}.
\end{enumerate}
\end{proposition}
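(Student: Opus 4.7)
The plan is to prove the two implications of the equivalence separately: $(1) \Rightarrow (2)$ by constructing Duplicator's strategy directly from the given locally invertible pair, and $(2) \Rightarrow (1)$ by extracting a pair from a winning strategy via a suitable back-and-forth subset of $\Ck A \times \Ck B$.

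For $(1) \Rightarrow (2)$, I would fix a non-empty locally invertible pair $(F,G)$ and have Duplicator maintain the invariant that at the current position $(s,t)$ there exist both $f \in F$ with $f^*(s) = t$ and $g \in G$ with $g^*(t) = s$. This holds at the initial position $(\bot,\bot)$ using the convention that coextensions fix the added root. If Spoiler plays $s' \rcvr s$ in $\Ck \As$, Duplicator replies with $t' := f^*(s')$, which covers $t$ by the assumed preservation of $\cvr$ under coextensions and satisfies $(s',t') \in \WAB$ because $f \in \SAB$. The invariant is re-established at $(s',t')$: the $F$-witness is again $f$, and the first clause of local invertibility, applied to $f$ and $s'$, yields a new $G$-witness $g'$ with $g'^*(t') = s'$. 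Spoiler's move in $\Ck \Bs$ is handled dually, using the second clause.

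For $(2) \Rightarrow (1)$, given a winning strategy $\sigma$, I would define $Z \subseteq \Ck A \times \Ck B$ to be the greatest subset of $\WAB$ that is closed under the forward and backward extension properties along $\cvr$ and under simultaneous prefixes; non-emptiness of $Z$ at $(\bot,\bot)$ is witnessed by the positions reachable via plays consistent with $\sigma$. I would then build $F$ by recursion on the forest $\Ck A$: starting from $\phi(\bot) := \bot$, choose for each cover $s^- \cvr s$ some $\phi(s) \rcvr \phi(s^-)$ with $(s, \phi(s)) \in Z$ using the forward property of $Z$. Setting $f(s) := \epsA(\phi(s))$ yields $f^* = \phi$ via the accumulator form of the coextension in all game comonads considered, so $f \in \SAB$ and is a coKleisli morphism by the paper's standing hypothesis on $\WAB$. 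Take $F$ to be the set of all such $f$, and define $G$ symmetrically.

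For local invertibility, given $f \in F$ with associated $\phi$ and any $s \in \Ck A$, set $t := \phi(s)$ and consider the matching forest paths $\bot \cvr s^{(1)} \cvr \cdots \cvr s^{(k)} = s$ and $\bot \cvr t^{(1)} \cvr \cdots \cvr t^{(k)} = t$; prefix-closure of $Z$ ensures $(s^{(i)}, t^{(i)}) \in Z$ for every $i$. I would construct the required $g \in G$ by setting its backward choice function $\psi$ to $\psi(t^{(i)}) := s^{(i)}$ along this path and extending recursively to the rest of $\Ck B$ via the back property of $Z$; the resulting $g$ satisfies $g^*(t) = s$, giving the first clause of local invertibility, with the second clause obtained by the dual argument. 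The hard part is establishing prefix-closure of $Z$, since the back-and-forth axioms alone do not imply it; this must be arranged by explicitly enforcing prefix-closure in the coinductive definition of $Z$, which is consistent provided $\WAB$ itself is closed under simultaneous prefixes — a mild property easy to verify for each of the three concrete game comonads of the paper. The remaining bookkeeping, namely extending partial choice functions on forests and verifying the two axioms of a locally invertible pair, is routine dependent choice on the well-founded forest structure.
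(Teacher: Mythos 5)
Your direction $(1)\Rightarrow(2)$ is essentially the paper's own argument: maintain the invariant that the current position $(s,t)$ admits witnesses $f \in F$ with $f^*(s)=t$ and $g \in G$ with $g^*(t)=s$, answer Spoiler with $f^*(s')$ (resp.\ $g^*(t'')$), which covers the old response by the assumed preservation of $\cvr$, and re-establish the invariant via the two local-invertibility clauses. (The paper phrases the invariant as membership in $\{ (s, f^*(s)) \mid s \in \Ck A,\, f \in F \} = \{ (g^*(t), t) \mid t \in \Ck B,\, g \in G \}$, which is the same thing.) Your direction $(2)\Rightarrow(1)$ is also, in substance, the paper's: where you take a greatest back-and-forth subset $Z \subseteq \WAB$ and build $F$, $G$ as forest-recursive choice functions valued in $Z$, the paper simply takes $Z = \Phi$, the set of positions occurring in plays that follow the winning strategy, defines $F$ (resp.\ $G$) as all functions all of whose pairs $(s, f^*(s))$ lie in $\Phi$, and establishes local invertibility by the same path-matching extension you describe. (Incidentally, your $f(s) := \epsA(\phi(s))$ should use the counit on $\Bs$, i.e.\ $f(s) = \varepsilon_{\Bs}(\phi(s))$.)

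There is, however, one genuine error, and since you present it as load-bearing it must be called out. You claim the construction requires $\WAB$ to be closed under simultaneous prefixes, and that this ``mild property'' holds for all three concrete game comonads. It fails for the pebbling comonad: $\WAB^{\Pk}$ asks only that the correspondence $\phi_{s,t}$ of \emph{current} pebble placements be a partial isomorphism, and truncating $s$ and $t$ can resurrect an earlier placement of a pebble that was not partially isomorphic. For instance, with $s = [(1,a),(1,a')]$ and $t = [(1,b),(1,b')]$, where $(a',b')$ extends to a partial isomorphism but $(a,b)$ does not, we have $(s,t) \in \WAB^{\Pk}$ while the simultaneous prefix $([(1,a)],[(1,b)])$ is not in $\WAB^{\Pk}$. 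Moreover, the generic proposition assumes nothing of this kind about $\WAB$, so a proof conditional on it does not prove the stated result. Fortunately the hypothesis is doubly unnecessary. First, non-emptiness of $Z$ never needs $\WAB$ to be prefix-closed: the set of positions reachable in plays following the winning strategy is contained in $\WAB$ (the strategy is winning), has the forth and back extension properties (the strategy always answers), and is closed under simultaneous prefixes automatically, because positions are built up round by round; this set can serve as $Z$ outright, which is exactly the paper's $\Phi$. Second, prefix-closure of $Z$ is never actually used in your local-invertibility argument: the pairs $(s^{(i)}, t^{(i)})$ along the path lie in $Z$ because your choice function $\phi$ was constructed so that $(u, \phi(u)) \in Z$ for every $u$, not because of any closure property of $Z$. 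Deleting the prefix-closure clause, and the paragraph worrying about it, leaves a correct proof that matches the paper's.
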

\begin{proof} 
Assuming (1), with a locally invertible pair $(F, G)$, we define a strategy for Duplicator inductively,
such that after each round, the play is within the set
\[ \{ (s, f^*(s)) \mid s \in \Ck A, f \in F \} \; = \; \{ (g^*(t), t) \mid t \in \Ck B, g \in G \} . \]
 Assume $(s, t)$ has been played. If Spoiler now plays $s' \rcvr s$ in $\Ck A$, then there is  $f \in F$ such that $f^*(s) = t$, and we respond with $t' = f^*(s') \rcvr f^*(s)$.
 Since $f \in \SAB$, $(s',t') \in \WAB$. The case when Spoiler plays in $\Ck B$ is symmetric.
%the sets
%$I_j \, := \, \{ (s, f^*(s)) \mid s \in \Alj, f \in F \} \cup \{ (g^*(t), t) \mid t \in \Blj, g \in G \}$
%can be seen to yield a back-and-forth system of partial isomorphisms  \cite{ebbinghaus2005finite}.

Assuming (2),
let  $\Phi$ be the set of all plays $(s, t)$ following the Duplicator strategy. 
Define 
\begin{align*}
F \, & := \, \{ f : \Ck A \rarr B \mid \forall s \in \Ck A. \, (s, f^*(s))  \in \Phi  \}, \\
G \, & := \, \{ g : \Ck B \rarr A \mid \forall t \in \Ck B. \, (g^*(t), t) \in \Phi \} . 
\end{align*}
Since the strategy is winning, $\Phi \subseteq \WAB$, and $F \subseteq \SAB$, $G \subseteq \SBA$.
We claim that for all $(s, t) \in \Phi$: (A) $\exists f \in F. \, f^*(s) = t$, and (B) $\exists g \in G. \, g^*(t) = s$. (A) follows by extending $(s, t)$ to a morphism $f : \Ck \As \rarr \Bs$. For any $s' \preford s$, we assign the corresponding predecessor of $t$.
For any $s'$ which is not a predecessor of $s$, let $s_1 = s \sqcap s'$, the meet of $s$ and $s'$.
We write $t_1$ for the corresponding predecessor of $t$.
We define $f$ on $s'$ by assigning $t_1$ in response to $s_1$, and then following Duplicator's responses as Spoiler plays according to $s'$ in $\Ck A$.
(B) follows by a symmetric argument.

Now for any $f \in F$ and $s \in \Ck A$, $(s, f^*(s)) \in \Phi$, and hence by (B) we can find $g \in G$ to witness local invertibility; the case for $g \in G$ and $t \in \Ck B$ is symmetric.
\end{proof}

The local invertibility condition on a pair of sets $(F, G)$ has a fixpoint characterization, which may be of some interest. 
%Writing $S :=  \homEk(\As, \Bs)$, $T := \homEk(\Bs, \As)$ for notational convenience, 
We define set functions $\Gamma : \pow(\SAB) \rarr \pow(\SBA)$, $\Delta : \pow(\SBA) \rarr \pow(\SAB)$:  
\begin{align*}
\Gamma(F) & = \{ g \in T \mid \forall t \in \Ck B. \exists f \in F. \, f^* g^* t = t \}, \\
\Delta(G) & = \{ f \in S \mid \forall s \in \Ck A. \exists g \in G. \,  g^* f^* s = s \} .
\end{align*}
These functions are monotone. Moreover, a pair of sets $(F, G)$ is locally invertible iff $F \subseteq \Delta(G)$ and $G \subseteq \Gamma(F)$. These conditions in turn imply that $F \subseteq \Delta \Gamma(F)$, and if this holds, then we can set $G := \Gamma(F)$ to obtain a locally invertible pair $(F, G)$. Thus existence of a locally invertible pair is equivalent to the existence of non-empty $F$ such that $F \subseteq \Theta(F)$, where $\Theta = \Delta \Gamma$. Since $\Theta$ is monotone, by Knaster-Tarski this is equivalent to the greatest fixpoint of $\Theta$ being non-empty. (Note that $\Theta(\varnothing) = \varnothing$).

If $\As$ and $\Bs$ are finite, so is $S$, and we can construct the greatest fixpoint by a finite descending sequence
$S \supseteq \Theta(S) \supseteq \Theta^2(S) \supseteq \cdots$. This fixpoint is non-empty iff $\As \eqbEk \Bs$.

We shall now turn to a detailed study of each of our comonads in turn.

\subsection{The Ehrenfecht-\Fraisse comonad}

A coKelisli morphism $f : \Ek \As \to \Bs$ is an $I$-morphism if $s \preford t$ and $\epsA(s) = \epsA(t)$ implies that $f(s) = f(t)$. An equivalent statement is that, if we add a binary relation symbol $I$ to the vocabulary, and set $I^{\As}$ to be the identity relation on $A$, and $I^{\Bs}$ to be the identity relation on $B$, then $f$ is also a homomorphism with respect to $I$. The significance of this condition is that, if $f : \Ek \As \to \Bs$ and $g : \Ek \Bs \to \As$ are $I$-morphisms, then $f^*(s) =t$, $g^*(t) = s$ imply that $(s,t)$ defines a partial isomorphism from $\As$ to $\Bs$. We refine the definition of the equivalence $\eqcEk$ accordingly. We say that $\As \eqcEk \Bs$ iff there are $I$-morphisms $f : \Ek \As \to \Bs$ and $g : \Ek \Bs \to \As$ with ${f^*}^{-1} = g^*$.

Note that, for any coKleisli morphism $f : \Ek \As \to \Bs$, there is an $I$-morphism $f_{I} : \Ek \As \to \Bs$, obtained by firstly restricting $f$ to non-repeating sequences, then extending it by applying the $I$-morphism condition for repetitions. It is easy to verify that $f_{I}$ is a homomorphism. Thus there is no need to modify the equivalence $\eqaEk$.
%Moreover, if we have $f : \Ek \As \to \Bs$ and $g : \Ek \Bs \to \As$ with $g^* f^* (s) = s$, then $g_{I}^* f_{I}^* (s) = s$.

We define $\WAB^{\Ek}$ to be the set of pairs $(s, t) \in \Ek \As \times \Ek \Bs$ such that $s = [a_1, \ldots , a_j]$, $t = [b_1, \ldots , b_j]$, and $\{ (a_i, b_i) \mid 1 \leq i \leq j \}$ defines a partial isomorphism from $\As$ to $\Bs$. This specifies the back-and-forth equivalence $\eqbEk$.

We now recall the \emph{bijection game} \cite{Hella1996}. In this variant of the Ehrenfeuch-\Fraisse game, Spoiler wins if the two structures have different cardinality. Otherwise, at round $i$, Duplicator chooses a bijection $\psi_i$ between $A$ and $B$, and Spoiler chooses an element $a_i$ of $A$. This determines the choice by Duplicator of $b_i = \psi_i(a_i)$. Duplicator wins after $k$ rounds if the relation $\{ (a_i, b_i) \mid 1 \leq i \leq k \}$ is a partial isomorphism.

\begin{proposition}
\label{bijgameprop}
The following are equivalent, for finite structures $\As$ and $\Bs$:
\begin{enumerate}
\item $\As \eqcEk \Bs$.
\item There is a winning strategy for Duplicator in the $k$-round bijection game.
\end{enumerate}
\end{proposition}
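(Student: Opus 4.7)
The plan is to translate directly between $I$-morphisms with mutually inverse Kleisli coextensions and Duplicator strategies in the bijection game. The bridge is the observation cited just before the proposition: if $f : \Ek \As \to \Bs$ and $g : \Ek \Bs \to \As$ are $I$-morphisms with $f^*(s) = t$ and $g^*(t) = s$, then $(s,t)$ defines a partial isomorphism from $\As$ to $\Bs$.

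For $(1)\Rightarrow(2)$, assume $I$-morphisms $f, g$ with $g^* = (f^*)^{-1}$. Because $f^*$ is length-preserving and prefix-preserving, its restriction to the one-step extensions of any sequence $s = [a_1,\ldots,a_{i-1}] \in \Ek A$ (and to length-one sequences when $i=1$) is a bijection onto the corresponding one-step extensions of $f^*(s)$. Reading off the final entries gives a bijection $\psi_s : A \to B$ with $\psi_s(a) = f[a_1,\ldots,a_{i-1},a]$; in particular $|A| = |B|$. Duplicator's strategy at round $i$, after Spoiler has played $a_1,\ldots,a_{i-1}$, is to offer $\psi_{[a_1,\ldots,a_{i-1}]}$. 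After $k$ rounds the recorded tuples are $s = [a_1,\ldots,a_k]$ and $t = f^*(s)$, with $g^*(t) = s$, so by the cited observation $\{(a_i,b_i) : 1 \le i \le k\}$ is a partial isomorphism.

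For $(2)\Rightarrow(1)$, from a winning strategy denote by $\psi_{[a_1,\ldots,a_{i-1}]}$ the bijection Duplicator plays at round $i$ after Spoiler's moves $a_1,\ldots,a_{i-1}$. Define $f[a_1,\ldots,a_i] := \psi_{[a_1,\ldots,a_{i-1}]}(a_i)$, and $g[b_1,\ldots,b_i] := a_i$ where the $a_j$ are reconstructed recursively by $a_j := \psi_{[a_1,\ldots,a_{j-1}]}^{-1}(b_j)$. A direct induction on length gives $g^* = (f^*)^{-1}$. That $f$ is a homomorphism is checked as follows: any tuple $(s_1,\ldots,s_n)$ in a relation of $\Ek \As$ is pairwise comparable, hence contained in a single chain $[a_1,\ldots,a_m]$ with $m \le k$; running the game with Spoiler enumerating this chain (padding to $k$ moves if needed) and invoking the partial-isomorphism winning condition yields preservation of the relation.

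The one delicate point is the $I$-morphism property for $f$ (the argument for $g$ is symmetric). Suppose $s = [a_1,\ldots,a_j] \preford t = [a_1,\ldots,a_m]$ and $\epsA(s) = \epsA(t) = a$. In the single $k$-round play where Spoiler plays $a_1,\ldots,a_m$ and then arbitrary moves, $f(s)$ and $f(t)$ are Duplicator's responses to the move $a$ at rounds $j$ and $m$ respectively. A winning Duplicator must assign the same response on both occasions: otherwise the final relation $\{(a_i,b_i)\}$ would already fail to be single-valued on its first coordinate, and a fortiori fail to be a partial isomorphism. This consistency --- the only place where the winning condition constrains Duplicator across rounds --- is exactly what turns the strategy into an $I$-morphism.
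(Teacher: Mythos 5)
Your proof is correct and follows essentially the same route as the paper's: translating between the bijections $\psi_s$ defined by $\psi_s(a) = f(s[a])$ and Duplicator strategies, using the cited observation that mutually inverse coextensions of $I$-morphisms yield partial isomorphisms. The paper's own proof is considerably terser; your additional details (why $\psi_s$ is a bijection via $f^*$ restricting to one-step extensions, the homomorphism check along chains, and the single-valuedness argument for the $I$-morphism property) are exactly the verifications the paper leaves implicit.
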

\begin{proof}
Assuming (1), we have $I$-morphisms $f : \Ek \As \to \Bs$ and $g : \Ek \Bs \to \As$ with $g^* = {f^*}^{-1}$. 
For each $s \in \{ []\} \cup A^{<k}$, we can define a map $\psi_s : A \to B$, by $\psi_s (a) = f(s[a])$. This is a bijection, with inverse defined similarly from $g$. These bijections provide a strategy for Duplicator. Since each $(s, f^*(s))$ is a partial isomorphism, this is a winning strategy.

Conversely, a winning strategy provides bjiections $\psi_s$, which we can use to define $f$ by $f(s[a]) = \psi_s(a)$. The winning conditions imply that this is an $I$-isomorphism in the coKleisli category.
\end{proof}

%\newpage

We can now state our main result on logical equivalences for the Ehrenfeucht-\Fraisse comonad.
\begin{theorem}
\begin{enumerate}
\item For all structures $\As$ and $\Bs$: $\As  \eqELk \Bs \; \IFF \; \As \eqaEk \Bs$.
\item  For all structures $\As$ and $\Bs$: $\As \eqLk \Bs \; \IFF \; \As \eqbEk \Bs$.
\item For all finite structures $\As$ and $\Bs$: $\As \eqLck \Bs \; \IFF \; \As \eqcEk \Bs$.
\end{enumerate}
\label{thm:mainEF}
\end{theorem}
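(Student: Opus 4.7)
The strategy is to go through each of the three statements in turn, reducing them via the game-theoretic characterizations already proved (or cited) in the preceding material to classical equivalences of finite model theory: the Ehrenfeucht–\Fraisse theorem for quantifier-rank $k$, its existential-positive version, and Hella's bijection-game theorem for counting logic.

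For part (1), I would unfold $\As \eqaEk \Bs$ as the existence of coKleisli morphisms $\Ek \As \to \Bs$ and $\Ek \Bs \to \As$. By Theorem \ref{EFgamethm}, each such morphism is equivalent to a winning Duplicator strategy in the existential $k$-round Ehrenfeucht-\Fraisse game in the corresponding direction. The classical result (a straightforward adaptation of the Ehrenfeucht-\Fraisse theorem to the existential positive fragment) says that Duplicator wins the existential EF game from $\As$ to $\Bs$ iff every sentence in $\ELk$ true in $\As$ is true in $\Bs$. Conjoining the two directions yields $\As \eqELk \Bs$.

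For part (2), I would invoke Proposition \ref{pisoprop} to replace $\As \eqbEk \Bs$ by the existence of a winning Duplicator strategy in the back-and-forth $\Ek$ game with winning relation $\WAB^{\Ek}$. Unfolding the definitions, the $\Ek$ game is precisely the standard $k$-round Ehrenfeucht-\Fraisse game: Spoiler's moves extending a sequence in $\Ek \As$ or $\Ek \Bs$ by one step correspond to choosing a new element in $\As$ or $\Bs$, and the winning condition $\WAB^{\Ek}$ is exactly the partial-isomorphism condition on the $k$ pairs accumulated. The classical Ehrenfeucht-\Fraisse theorem then gives $\As \eqLk \Bs$.

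For part (3), I would use Proposition \ref{bijgameprop} to equate $\As \eqcEk \Bs$ (on finite structures) with Duplicator having a winning strategy in the $k$-round bijection game. Hella's theorem \cite{Hella1996} asserts that this game characterizes $\eqLck$ on finite structures, giving the desired equivalence.

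The only real work beyond citing classical results is the bookkeeping in part (2) that shows the back-and-forth $\Ek$ game, as defined generically via the covering relation $\cvr$ and the set $\WAB^{\Ek}$, really is the classical EF game rather than a categorical variant. The key observation is that the covering relation on $\Ek A = \Alk$ under the prefix order is $s \cvr s[a]$, so a move $s \rcvr s'$ is precisely the choice of an element $a \in A$ extending the current play; and the coextension $f^*$ sends $s[a]$ to $f^*(s)[f(s[a])]$, so after $j$ rounds the accumulated pair of sequences is $([a_1,\ldots,a_j],[b_1,\ldots,b_j])$ with $b_i$ chosen by Duplicator in response to $a_i$. The $\WAB^{\Ek}$ condition then is literally the partial-isomorphism requirement. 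With that identification in hand, the three equivalences follow from well-known finite model theory results applied to the game-theoretic reformulations already established.
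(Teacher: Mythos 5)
Your proposal is correct and follows essentially the same route as the paper: part (1) from Theorem~\ref{EFgamethm} plus the classical existential-positive preservation result, part (2) from Proposition~\ref{pisoprop} plus the Ehrenfeucht-\Fraisse theorem, and part (3) from Proposition~\ref{bijgameprop} plus Hella's bijection-game characterization of counting logic. The only difference is that you spell out the bookkeeping identifying the generic back-and-forth $\Ek$ game with the classical EF game, which the paper leaves implicit; that is a useful addition but not a different argument.
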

\begin{proof}
(1) follows from Theorem~\ref{EFgamethm} and standard results \cite{kolaitis1990expressive}.
(2) follows from Proposition~\ref{pisoprop} and the Ehrenfeucht-\Fraisse theorem \cite{ebbinghaus2005finite}.
(3) follows from Proposition~\ref{bijgameprop} and results originating in \cite{Hella1996} and expounded in \cite{Libkin2004}.
\end{proof}

If we modify $\WAB^{\Ek}$, and hence $\eqbEk$, by asking for partial correspondences rather than partial isomorphisms, we obtain a characterization of elementary equivalence for equality-free logic \cite{Casanovas1996}.

\subsection{The Pebbling Comonad}

A similar notion of $I$-morphism applies to the pebbling comonad as we saw previously with the Ehrenfeucht-\Fraisse comonad \cite{abramsky2017pebbling}. 

Given $s = [(p_1,a_1), \ldots , (p_n,a_n)] \in \Pk \As$ and $t = [(p_1,b_1), \ldots , (p_n,b_n)] \in \Pk \Bs$, we define $\phi_{s, t} \; = \; \{ (a_p, b_p) \mid p \in \kset, \; \mbox{$p$ occurs in $s$} \}$, where the last occurrence of $p$ in $s$ is on $a_p$, and the corresponding last occurrence in $t$ is on $b_p$.
We define $\WAB^{\Pk}$ to be the set of all such $(s,t)$ for which $\phi_{s,t}$ is a partial isomorphism. This specifies the back-and-forth equivalence $\eqbPk$.

We now state the following result, characterizing the equivalences induced by finite-variable logics $\Lvk$.
\begin{theorem}
\begin{enumerate}
\item For all structures $\As$ and $\Bs$: $\As  \eqELvk \Bs \; \IFF \; \As \eqaPk \Bs$.
\item  For all finite structures $\As$ and $\Bs$: $\As \eqLvk \Bs \; \IFF \; \As \eqbPk \Bs$.
\item For all finite structures $\As$ and $\Bs$: $\As \eqLvck \Bs \; \IFF \; \As \eqcPk \Bs$.
\end{enumerate}
\label{thm:mainPebble}
\end{theorem}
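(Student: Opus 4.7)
The plan is to follow the same three-part template used for Theorem~\ref{thm:mainEF}, substituting pebble-game results for the Ehrenfeucht-\Fraisse versions. Concretely, part (1) needs Theorem~\ref{strmorth} together with the standard result that $\eqELvk$ coincides with two-way existence of a Duplicator winning strategy in the existential $k$-pebble game; part (2) needs the back-and-forth pebble game viewed as an instance of Proposition~\ref{pisoprop} plus the $k$-pebble game theorem for $\Lvk$; and part (3) needs a pebbling analogue of Proposition~\ref{bijgameprop} linking $\eqcPk$ to Hella's bijective $k$-pebble game, which Hella has shown characterizes $\eqLvck$.

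For (1), I would simply combine Theorem~\ref{strmorth}, applied in each direction between $\As$ and $\Bs$, with Kolaitis-Vardi's characterization of $\eqELvk$ by two-way winning strategies in the existential $k$-pebble game. No further pebble bookkeeping is required because the existential game has no ``back'' clause and the asymmetry already lines up with coKleisli morphisms.

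For (2), I would first verify that the generic framework preceding Proposition~\ref{pisoprop} applies to $\Pk$: the universe $\Pk A = (\kset \times A)^{+}$ with the prefix order is a forest with covering relation ``append one move,'' the coextension $f^{*}$ acts pointwise in the last coordinate and hence preserves covers, and the hypothesis that $(s, f^{*}(s)) \in \WAB^{\Pk}$ at every node forces $f$ to be a homomorphism (this is where the $I$-morphism refinement of \cite{abramsky2017pebbling} enters, to promote partial homomorphisms to partial isomorphisms). Proposition~\ref{pisoprop} then identifies $\eqbPk$ with existence of a Duplicator winning strategy in the back-and-forth game where at each round Spoiler picks a pebble and moves it in one structure and Duplicator mirrors the choice in the other, with the winning condition that $\phi_{s,t}$ is a partial isomorphism. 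This is precisely the standard $k$-pebble game, and the Kolaitis-Vardi theorem gives $\As \eqLvk \Bs$ iff Duplicator wins it.

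For (3), the main work, and the place I expect the only real obstacle, is establishing the analogue of Proposition~\ref{bijgameprop}: $\As \eqcPk \Bs$ iff Duplicator wins the bijective $k$-pebble game of Hella. Given $I$-morphisms $f : \Pk \As \to \Bs$ and $g : \Pk \Bs \to \As$ with ${f^{*}}^{-1} = g^{*}$, for each play history $s \in \Pk A$ with image $t = f^{*}(s)$ and each pebble index $p \in \kset$, I would define $\psi_{s,p} : A \to B$ by $\psi_{s,p}(a) = f(s[(p,a)])$; the identities $g^{*}f^{*} = \id$ and $f^{*}g^{*} = \id$, combined with the $I$-morphism property and the fact that $(p,a)$ overwrites any earlier occurrence of pebble $p$, force $\psi_{s,p}$ to be a bijection with inverse obtained analogously from $g$. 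These bijections give Duplicator's strategy, and the winning condition follows from the partial-isomorphism content of $\WAB^{\Pk}$. Conversely, a family of bijections indexed by play history and pebble index assembles, by recursion on play length, into $f$ and $g$ with mutually inverse coextensions: at each step one uses the bijection at the current history to extend, and the inverse bijection to recover the other direction. The subtle point, and the one to handle carefully, is that the $I$-morphism condition must be imposed to make the extension well-defined when Spoiler re-places an already-placed pebble on the same element. Combining this bijective-game characterization with Hella's theorem that the bijective $k$-pebble game captures $\eqLvck$ yields (3).
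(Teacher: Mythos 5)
Your proposal is correct, but it takes a genuinely different route from the paper: the paper's entire proof of Theorem~\ref{thm:mainPebble} is a citation --- ``This follows from Theorems 14, 18 and 20 of \cite{abramsky2017pebbling}'' --- whereas you reconstruct the argument inside the present paper's generic framework. Your part (1) coincides in substance with the cited Theorem 14 (Theorem~\ref{strmorth} in both directions plus the Kolaitis--Vardi characterization of existential-positive $k$-variable equivalence). For part (2), you instantiate Proposition~\ref{pisoprop} at $\Pk$ with $\WAB^{\Pk}$, checking the three standing hypotheses (forest order on $(\kset \times A)^{+}$, cover-preservation of $f^{*}$, and that winning-at-every-node forces a coKleisli morphism --- this last check does go through, using clause (2) of the definition of $R^{\Pk\As}$ to see that the relevant last occurrences of pebbles survive into the longest play), and then identify the resulting back-and-forth $\Pk$ game with the standard $k$-pebble game; note that the matching of pebble indices between $s$ and $t$ is not a rule of the comonadic game but is forced by the winning condition, a point worth making explicit. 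For part (3), your construction of the bijections $\psi_{s,p}(a) = f(s[(p,a)])$ from mutually inverse $I$-morphisms, and the converse assembly by recursion on play length, is exactly the analogue of Proposition~\ref{bijgameprop} and mirrors how Theorem 20 of \cite{abramsky2017pebbling} is actually proved via Hella's bijection game. What your route buys is self-containment and uniformity: it exhibits the pebbling case as a second instance of the same template as Theorem~\ref{thm:mainEF}, rather than outsourcing it; what the paper's route buys is economy, since the work was already done in the earlier paper. Both routes ultimately rest on the same external model-theoretic inputs (Kolaitis--Vardi, Immerman--Poizat/Barwise, Hella), which no comonadic argument can eliminate since the theorem's content is precisely the bridge to those logics.
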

\begin{proof}
This follows from Theorems 14, 18 and 20 of \cite{abramsky2017pebbling}.
\end{proof}

\subsection{The Modal Comonad}

The key notion of equivalence in modal logic is bisimulation \cite{blackburn2002modal,sangiorgi2009origins}. We shall define the finite approximants to bisimulation \cite{hennessy1980observing}.\footnote{Our focus on finite approximants in this paper is for uniformity, and because they are relevant in resource terms. We can extend the comonadic semantics beyond the finite levels. We shall return to this point in the final section.}
Given Kripke structures $\As$ and $\Bs$, we define a family of relations $\sim_k \; \subseteq \; A \times B$: $\sim_0 \, = \, A \times B$; $a \sim_{k+1} b$ iff (1) for all unary $P$, $\PA(a)$ iff $\PB(b)$; and (2) for all binary $\Ralph$, $\RaA(a, a')$ implies for some $b'$, $\RaB(b,b')$ and $a' \sim_k b'$, and $\RaB(b,b')$ implies for some $a'$, $\RaA(a, a')$ and $a' \sim_k b'$.

We define $\WAB^{\Mk}$ to be the set of all $(s, t) \in \Mk (\As, a) \times \Mk (\Bs, b)$ such that 
\[s = [a_0, \alpha_1, a_1, \ldots , \alpha_j, a_{j}], \qquad t = [b_0, \alpha_1, b_1, \ldots , \alpha_j, b_{j}], \]
and for all $i$ and all unary $P$, $\PA(a_i)$ iff $\PB(b_i)$. This specifies the back-and-forth equivalence $\eqbMk$.

\begin{theorem}
\label{bisimthm}
For pointed Kripke structures $(\As, a)$ and $(\Bs, b)$: $a \sim_k b$ iff $(\As, a) \eqbMk (\Bs, b)$.
\end{theorem}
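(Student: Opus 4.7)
My plan is to apply Proposition~\ref{pisoprop} to reduce $(\As, a) \eqbMk (\Bs, b)$ to the assertion that Duplicator has a winning strategy in the back-and-forth $\Mk$ game between $(\As, a)$ and $(\Bs, b)$, and then to recognise that this game is, up to convention, exactly the standard bounded-depth bisimulation game characterising $\sim_k$.

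First I would unpack the geometry of the game. Positions are pairs $(s, t) \in \Mk(\As, a) \times \Mk(\Bs, b)$, with the forest order given by the prefix ordering; Spoiler's covering moves extend $s$ by a single step $(\alpha, x)$, where $\alpha$ is a modality and $x$ is an $\alpha$-successor of the last element of $s$. From $(\bot, \bot)$ the first round is forced, since the only covers of $\bot$ are the distinguished elements $[a]$ and $[b]$, and the $\WAB^{\Mk}$ check at the resulting position $([a], [b])$ demands $\PA(a) \IFF \PB(b)$. In every subsequent round, the definition of $\WAB^{\Mk}$ forces Duplicator to respond with the same modality $\alpha$ chosen by Spoiler and to pick a successor whose unary-predicate type matches that of Spoiler's successor.

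The second step is an induction on $k$, identifying the winning condition of the back-and-forth $\Mk$ game with the relation $\sim_k$. In the inductive step, clause (1) of the definition of $a \sim_{k+1} b$ corresponds to the propositional-equivalence constraint demanded at the first position after the forced $\bot$-move; clause (2), the back-and-forth condition on $\alpha$-successors satisfying $\sim_k$, matches the first round of genuine choice after $([a], [b])$, after which I would invoke the inductive hypothesis. The latter relies on the observation that the subtree of $\Mk(\As, a)$ rooted at a cover $[a, \alpha, a']$ is canonically isomorphic, after truncation by one level, to $\MM_{k-1}(\As, a')$, so that the residual winning condition coincides with Duplicator winning the corresponding back-and-forth $\MM_{k-1}$ game at $(a', b')$.

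The main difficulty is bookkeeping rather than conceptual: carefully aligning the depth convention of $\Mk$ with the index convention of $\sim_k$ via the forced initial $\bot$-move so that the induction closes cleanly. Both directions of the induction are then routine: in the forward direction I would lift $\sim_k$-witnesses at each level into Duplicator responses, and in the reverse direction extract successor witnesses from Duplicator's strategy at each covering choice.
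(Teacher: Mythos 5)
Your reduction via Proposition~\ref{pisoprop} and your unpacking of the game are exactly right, and since the paper states Theorem~\ref{bisimthm} without proof, the route you describe (forced first round at $([a],[b])$, modality matching, the isomorphism of the subtree above $[a,\alpha,a']$ with $\MM_{k-1}(\As,a')$, induction on $k$) is surely the intended one. However, the issue you set aside as ``bookkeeping'' is not bookkeeping: it is a genuine mismatch, and your induction has no valid base case. What your recursion actually establishes is that Duplicator wins the $\Mk$ game iff $a \approx_k b$, where $\approx_k$ is defined by the same recursion as $\sim_k$ but with base case ``$a \approx_0 b$ iff $a$ and $b$ agree on all unary predicates''. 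The reason is that $\WAB^{\Mk}$ demands agreement on unary predicates at \emph{every} position of a play, including the final, depth-$k$ one, whereas $\sim_k$ as defined in the paper (with $\sim_0 \, = \, A \times B$) imposes no condition at all at depth $k$. The two recursions coincide at every level except the bottom one, which is exactly where an induction must start.

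The gap is real, not presentational: with one binary relation and one unary predicate $P$, let $\As$ be $a \labar{\alpha} a'$ with $\PA = \{a'\}$, and let $\Bs$ be $b \labar{\alpha} b'$ with $\PB = \es$. Then $a \sim_1 b$ (clause (1) holds at the roots, and clause (2) only requires $a' \sim_0 b'$, which is vacuous), yet Duplicator loses the $\mathbb{M}_1$ game: after the forced first round, Spoiler plays $[a,\alpha,a']$, and the unique reply $[b,\alpha,b']$ is not in $\WAB^{\mathbb{M}_1}$, since $\PA(a')$ holds and $\PB(b')$ fails; hence by Proposition~\ref{pisoprop}, $(\As,a) \not\eqbMk (\Bs,b)$. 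So the statement is false under the paper's literal base case for $\sim_k$, and no amount of index realignment around the forced $\bot$-move can repair your induction. The correct move is to flag that $\sim_0$ must be taken to be agreement on unary predicates (the standard convention for modal approximants, and the one the surrounding results require: the depth-one formula $\dia p$ already separates the two structures above, so $\eqbMk$ could not otherwise coincide with $\eqMk$). With that corrected base case, your induction closes cleanly and the rest of your outline is sound.
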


Turning to logic, we will  consider $\MLk$, the \emph{modal fragment} of modal depth $\leq k$. This arises from the standard translation of (multi)modal logic into $\Linf$ \cite{blackburn2002modal}. Let us fix a relational vocabulary $\sg$ with symbols of arity $\leq 2$. For each unary symbol $P$, there will be a corresponding propositional variable $p$. Formulas are built from these propositional variables by propositional connectives, and modalities $\boxa$, $\dia$ corresponding to the binary relation symbols $\Ralph$ in $\sigma$. Modal formulas $\vphi$ then admit a translation into formulas $\lsem \vphi \rsem = \psi(x)$ in one free variable. The translation sends propositional variables $p$ to $P(x)$, commutes with the propositional connectives, and  sends $\dia \vphi$ to $\exists y. \, \Ralph(x,y) \AND \psi(y)$, where $\psi(x) = \lsem \vphi \rsem$. This translation is semantics-preserving: given  a $\sg$-structure $\As$ and $a \in A$, then $\As, a \models \vphi$ in the sense of Kripke semantics iff $\As \models \psi(a)$ in the standard model-theoretic sense, where $\psi(x) = \lsem \vphi \rsem$.

We define the modal depth of a modal formula $\vphi$ as the maximum nesting depth of modalities occurring in $\vphi$. $\MLk$ is then the image of the translation of modal formulas of modal depth $\leq k$. The existential positive fragment $\exists\MLk$ arises from the modal sublanguage in which formulas are built from propositional variables using only conjunction, disjunction and the diamond modalities $\dia$.

Extensions of the modal language with counting capabilities have been studied in the form of \emph{graded modalities} \cite{Rijke2000}. These have the form $\dia^n$, $\boxa^n$, where $\As, a \models \dia^n \vphi$ if there are at least $n$ $\Ralph$-successors of $a$ which satisfy $\vphi$. 
We define $\MLk(\Count)$ to be the extension of the modal fragment with graded modalities.

A corresponding notion of graded bisimulation is given in  \cite{Rijke2000}. This is in turn related to \emph{resource bismulation} \cite{corradini1999graded}, which has been introduced in the concurrency setting. The two notions are shown to coincide for image-finite Kripke structures in \cite{aceto2010resource}, who also show that they can be presented in a simplified form.
We recall that a Kripke structure $\As$ is image-finite if for all $a \in A$ and $\Ralph$, $\Ralph(a) \, := \, \{ a' \mid \RA(a,a') \}$  is finite.

Adapting the results in \cite{aceto2010resource}, we define approximants $\gb_k$ for graded bisimulation: $\gb_0 \, = \, A \times B$, and $a \gb_{k+1} b$ if for all $P$, $\PA(a)$ iff $\PB(b)$, and for all $\Ralph$, there is a bijection $\theta : \RA(a) \cong \RB(b)$ such that, for all $a' \in \RA(a)$, $a' \gb_k \theta(a')$.

We can also define a corresponding graded bisimulation game between $(\As, a)$ and $(\Bs, b)$. At round 0, the elements $a_0 = a$ and $b_0 = b$ are chosen. Duplicator wins if for all $P$, $\PA(a)$ iff $\PB(b)$, otherwise Spoiler wins. At round $i+1$, Spoiler chooses some $\Ralph$, and Duplicator chooses a bijection $\theta_i : \RaA(a_i) \cong \RaB(b_i)$. If there is no such bijection, Spoiler wins. Otherwise, Spoiler then chooses $a_{i+1} \in \RA(a_i)$, and $b_{i+1} := \theta_i(a_{i+1})$. Duplicator wins this round if for all $P$, $\PA(a_{i+1})$ iff $\PB(b_{i+1})$, otherwise Spoiler wins.

This game is evidently analogous to the bijection game we encountered previously.
\begin{proposition}
\label{gradedbisimprop}
The following are equivalent:
\begin{enumerate}
\item There is a winning strategy for Duplicator in the $k$-round graded bisimulation game between $(\As, a)$ and $(\Bs, b)$.
\item $a \gb_k b$.
\item $(\As, a) \eqcMk (\Bs, b)$.
\end{enumerate}
\end{proposition}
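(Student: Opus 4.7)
The plan is to establish $(1) \Leftrightarrow (2)$ by a direct induction on $k$, and then $(1) \Leftrightarrow (3)$ by transferring data between Duplicator strategies and coKleisli isomorphisms, in close analogy with the argument of Proposition~\ref{bijgameprop}. For $(1) \Leftrightarrow (2)$, the induction proceeds by unfolding the definitions: the recursive clause of $\gb_{k+1}$---agreement on unary predicates, and, for each $\Ralph$, a bijection $\theta : \RaA(a) \cong \RaB(b)$ such that $a' \gb_k \theta(a')$ for all $a' \in \RaA(a)$---is precisely what a Duplicator strategy must supply at the first round of the $(k{+}1)$-round game, with the continuation of the strategy on the subgame from $(a', \theta(a'))$ supplying the inductive witnesses.

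For $(1) \Rightarrow (3)$, given a winning strategy, I will define $f : \Mk(\As, a) \to (\Bs, b)$ recursively by $f([a]) := b$ and $f(s[\alpha, a']) := \theta_s(a')$, where $\theta_s$ is the bijection Duplicator plays at the position reached after $s$. A short induction shows that $f^*(s)$ is the parallel sequence on the $\Bs$ side with the same sequence of modalities and last element $f(s)$, so propositional agreement (in both directions, as required by the winning condition) and preservation of each $\RaA$ both hold, making $f$ a genuine morphism in $\CSp$. Inverting every bijection in the strategy yields a symmetric winning strategy from $(\Bs, b)$ to $(\As, a)$ and hence a coKleisli morphism $g : \Mk(\Bs, b) \to (\As, a)$; by construction $g(f^*(s)) = \epsA(s)$, which lifts along the recursive definition of the coextension to $g^* \circ f^* = \id_{\Mk(\As,a)}$, and symmetrically $f^* \circ g^* = \id_{\Mk(\Bs,b)}$, so $(\As, a) \eqcMk (\Bs, b)$. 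For $(3) \Rightarrow (1)$, given such $f, g$ I extract Duplicator's response to a Spoiler move $\Ralph$ at position $(s, f^*(s))$ as the map $\theta_s(a') := f(s[\alpha, a'])$: the homomorphism condition on $f$ places its image in $\RaB(\epsA(f^*(s)))$, while unfolding $g^* \circ f^* = \id$ at $s[\alpha, a']$ gives $g(f^*(s)[\alpha, \theta_s(a')]) = a'$, and the symmetric use of $f^* \circ g^* = \id$ provides the converse inverse, so each $\theta_s$ is a bijection; propositional agreement at each new position follows from $f$ and $g$ both being homomorphisms together with the inverse identities, yielding the iff required by the winning condition.

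The main obstacle will be to verify that the bijectivity of every $\theta_s$ is genuinely forced by the two global coextension identities. Because $(\cdot)^*$ propagates only along the covering edges of the unravelling tree, one must thread the induction carefully at each node to extract a two-sided inverse at the local level from the global identities $g^* \circ f^* = \id$ and $f^* \circ g^* = \id$, rather than settling for mere one-sided injectivity of $f$ or $g$ separately; similarly, the iff for propositional atoms requires combining the homomorphism conditions of both $f$ and $g$ with the two coextension identities.
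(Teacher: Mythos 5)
Your proposal follows exactly the route the paper intends: the paper gives no explicit proof of this proposition, remarking only that the graded bisimulation game ``is evidently analogous to the bijection game'', and your transfer between winning strategies and mutually inverse coKleisli morphisms is precisely the analogue of the proof of Proposition~\ref{bijgameprop}. Your treatment of $(1)\Leftrightarrow(3)$ is sound and handles the real delicacies correctly: the recursion $f(s[\alpha,a']) := \theta_s(a')$ does yield a morphism of $\CSp$ (atomic agreement along every winning play gives preservation of each unary $P$, and the bijections landing in $\RaB(f(s))$ give preservation of each $R_\alpha$); the pointwise identity $g(f^*(s)) = \epsA(s)$ does lift inductively along the coextension to $g^*\circ f^* = \id$; and in the converse direction your extraction of a two-sided inverse for each $\theta_s$ from the two global identities $g^*\circ f^*=\id$ and $f^*\circ g^*=\id$ is exactly the required argument.

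There is, however, a genuine gap in the step you dismiss as ``unfolding the definitions'', located at the base case of your induction for $(1)\Leftrightarrow(2)$. With the paper's literal definition $\gb_0\, =\, A\times B$, the approximant $\gb_k$ imposes atomic agreement only at depths $0,\dots,k-1$, whereas the $k$-round game (and likewise $\eqcMk$) checks unary predicates at the final positions, i.e.\ at depth $k$ as well. Your inductive scheme is right, but its base case is false: Duplicator wins the $0$-round game iff $a$ and $b$ agree on all unary $P$, while $a \gb_0 b$ holds unconditionally. Concretely, let $\As$ have a single transition $a \labar{\alpha} a'$ with $\PA(a')$, and $\Bs$ a single transition $b \labar{\alpha} b'$ with $\neg\PB(b')$, and no other atoms: then $a \gb_1 b$ (atoms agree at the roots, out-degrees match), but Duplicator loses the $1$-round game at its last check, and there is not even a coKleisli morphism $\MM_1(\As,a)\to(\Bs,b)$, let alone an isomorphism. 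So, taken literally, (2) is strictly weaker than (1) and (3). The repair is to take $\gb_0$ to relate $(a,b)$ only when they agree on all unary predicates (equivalently, to build the atomic check into every level, as in the literature the paper adapts); with that adjustment your induction goes through verbatim. This glitch originates in the paper's definition rather than in your argument, but a complete proof must flag and fix it rather than assert that the recursive clauses align ``precisely''.
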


\begin{theorem}
\begin{enumerate}
\item For all Kripke structures $\As$ and $\Bs$: $\As  \eqEMk \Bs \; \IFF \; \As \eqaMk \Bs$.
\item  For all Kripke structures $\As$ and $\Bs$: $\As \eqMk \Bs \; \IFF \; \As \eqbMk \Bs$.
\item For all image-finite Kripke structures $\As$ and $\Bs$: $\As \eqMck \Bs \; \IFF \; \As \eqcMk \Bs$.
\end{enumerate}
\end{theorem}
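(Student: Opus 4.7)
The overall plan is to treat each item as an instance of the same template used for the Ehrenfeucht--\Fraisse\ and pebbling comonads (Theorems~\ref{thm:mainEF} and~\ref{thm:mainPebble}): combine the coKleisli-to-game characterization already established for $\Mk$ with a standard Hennessy--Milner-style result relating the game to the logic. The translation $\lsem \cdot \rsem$ of modal formulas into $\Linf$ recalled just above makes $\MLk$, $\exists\MLk$ and $\MLck$ into genuinely first-order fragments, so I can freely slide between modal and first-order notation.

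For (1), I would first use Theorem~\ref{simthm} to reformulate $\As \eqaMk \Bs$ as a pair of $k$-round simulations $a \simord_k b$ and $b \simord_k a$. One direction of the equivalence with $\eqEMk$ is a routine induction on $\vphi \in \exists\MLk$ showing that $\simord_k$ preserves formulas of modal depth $\le k$: propositional atoms are handled by clause (1) of simulation, conjunctions and disjunctions by induction, and $\dia \vphi$ by clause (2) together with the induction hypothesis at depth $k-1$. The converse uses characteristic formulas: by induction on $k$, for every state one constructs a formula $\chi^{k}_{a} \in \exists\MLk$ satisfied at $a$ such that $\As, a \models \chi^{k}_{a}$ and $\Bs, b \models \chi^{k}_{a}$ imply $a \simord_k b$ (the inductive step takes a disjunction, over the action labels $\alpha$ and the $\Ralph$-successors $a'$ of $a$, of $\dia \chi^{k-1}_{a'}$; these disjunctions are allowed in $\exists\MLk \subseteq \Linf$, since $\Linf$ admits infinitary disjunction).

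For (2), Theorem~\ref{bisimthm} reduces $\eqbMk$ to the $k$-step bisimilarity $\sim_k$. The equivalence $\sim_k \;\Leftrightarrow\; \eqMk$ is the finite-level Hennessy--Milner theorem, proved exactly as in~\cite{blackburn2002modal}: preservation is a straightforward induction on $\vphi \in \MLk$ using both back and forth clauses, and the converse is again by constructing characteristic formulas, now in $\MLk$ rather than $\exists\MLk$ (so one may take conjunctions of a characteristic $\dia$-formula for each successor together with a $\boxa$-formula bounding the successors). For (3), I would invoke Proposition~\ref{gradedbisimprop} to identify $\eqcMk$ with the graded bisimilarity approximant $\gb_k$; the main work is then to establish $\gb_k \;\Leftrightarrow\; \eqMck$ under image-finiteness. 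The forward direction is an induction on formulas of $\MLck$: the clause for $\dia^n \vphi$ holds because a bijection $\theta : \RaA(a) \cong \RaB(b)$ restricts to a bijection between the sets of $\Ralph$-successors satisfying $\vphi$. The backward direction is the delicate part.

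The main obstacle is exactly the backward direction of (3), where image-finiteness is indispensable. Given $\As, a$ and $\Bs, b$ agreeing on all $\MLck$-formulas of depth $\le k$, one must produce, for each $\alpha$, a bijection $\theta$ between the finite sets $\RaA(a)$ and $\RaB(b)$ such that corresponding pairs are already $\gb_{k-1}$-related. The plan is to partition $\RaA(a)$ and $\RaB(b)$ into their $\MLc_{k-1}$-equivalence classes (finitely many, because each side is finite), observe that a graded modality $\dia^n \chi^{k-1}_{a'}$ (applied to the characteristic $\MLck$-formula of the class of $a'$, which exists since the class is finite image-finite) forces $|\RaA(a) \cap [a']| = |\RaB(b) \cap [a']|$ for every class, and then stitch together class-wise bijections using the induction hypothesis to obtain $\theta$. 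One must first check that image-finiteness lets one express characteristic formulas of bounded modal depth by finite conjunctions inside $\MLck$ (rather than in $\Linf$), so that agreement on $\MLck$ really does transfer the counting information; this is where image-finiteness is used twice, once to enumerate successors and once to bound the size of each $\MLc_{k-1}$-type.
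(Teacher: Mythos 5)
Your decomposition is exactly the paper's: each item is obtained by combining the comonadic characterization already established (Theorem~\ref{simthm} for $\simord_k$, Theorem~\ref{bisimthm} for $\sim_k$, Proposition~\ref{gradedbisimprop} for $\gb_k$) with the corresponding classical modal-logic result relating that relation to the logic. The paper discharges the second halves by citation to \cite{blackburn2002modal}, \cite{hennessy1980observing} and \cite{Rijke2000,aceto2010resource}, whereas you sketch proofs of them; that is a legitimate way to fill in the same outline, and your treatments of (2) and (3) are essentially sound (in (3) your worry about expressing characteristic formulas by \emph{finite} conjunctions is unnecessary, since $\MLck$, like $\MLk$, is a fragment of $\Linf$ and so admits infinitary conjunction).

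There is, however, a genuine error in your sketch for (1): the characteristic formula $\chi^{k}_{a}$ must be built with an infinitary \emph{conjunction} over the action labels and successors, not a disjunction, i.e.
\[ \chi^{k}_{a} \; := \; \bigwedge \{ p \mid \PA(a) \} \; \wedge \; \bigwedge_{\alpha} \bigwedge_{a' \in \RaA(a)} \dia \chi^{k-1}_{a'} . \]
With your disjunction, $\Bs, b \models \chi^{k}_{a}$ only guarantees that \emph{some} successor of $a$ has a match among the successors of $b$, whereas the simulation clause demands that \emph{every} successor of $a$ be matched, so the claimed implication ``$\Bs, b \models \chi^{k}_{a}$ implies $a \simord_k b$'' fails. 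Concretely, let $a$ have two $\alpha$-successors, one satisfying only $P$ and one satisfying only $Q$, and let $b$ have a single $\alpha$-successor satisfying only $P$: then $b$ satisfies the disjunctive $\chi^{2}_{a}$ (via the disjunct $\dia P$), yet $a \not\simord_2 b$. (A further symptom: a state with no successors would receive an empty disjunction, i.e.\ $\bot$, and would fail to satisfy its own characteristic formula.) The repair is immediate, since conjunction --- including infinitary conjunction inherited from $\Linf$ --- is available in $\exists\MLk$; this availability is also precisely why (1) and (2) hold for arbitrary, not just image-finite, Kripke structures.
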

\begin{proof}
(1) follows from Proposition~\ref{simthm} and standard results on preservation of existential positive modal formulas by simulations \cite{blackburn2002modal}.
(2) follows from Theorem~\ref{bisimthm} and the  Hennesy-Milner Theorem \cite{hennessy1980observing,blackburn2002modal}.
(3) follows from Proposition~\ref{gradedbisimprop} and the results in  \cite{Rijke2000,aceto2010resource}.
\end{proof}

\section{Coalgebras and combinatorial parameters}
Another fundamental aspect of comonads is that they have an associated notion of \emph{coalgebra}. A coalgebra for a comonad $(G, \varepsilon, \delta)$ is a morphism $\alpha : A \to G A$ such that the following diagrams commute:
\begin{center}
\begin{tikzcd}
A  \ar[r, "\alpha"] \ar[d, "\alpha"']
& G A \ar[d,  "\delta_{A}"] \\
G A  \ar[r, "G \alpha"] 
& G^2 A
\end{tikzcd}  
$\qquad \qquad$
\begin{tikzcd}
A \ar[r, "\alpha"] \ar[rd, "\id_A"']
& G A \ar[d, "\epsA"] \\
& A
\end{tikzcd}
\end{center}
%Note in particular that a $G$-coalgebra structure on $A$ makes it a retract of $G A$ via the counit $\epsA$.
%This implies that coalgebra maps are injective.

Our use of indexed comonads $\Ck$ opens up a new kind of question for coalgebras. Given a structure $\As$, we can ask: what is the least value of $k$ such that a $\Ck$-coalgebra exists on $\As$?  We call this the \emph{coalgebra number} of $\As$. We shall find that for each of our comonads, the coalgebra number is a significant combinatorial parameter of the structure.

\subsection{The Ehrenfeucht-\Fraisse comonad and tree-depth}

A graph is $G = (V, {\adj})$, where $V$ is the set of vertices, and $\adj$ is the adjacency relation, which is symmetric and irreflexive.
A forest cover for $G$ is a forest $(F, {\leq})$ such that $V \subseteq F$, and if $v \adj v'$, then $v \comp v'$.
The tree-depth $\td(G)$ is defined to be $\min_{F} \hgt(F)$, where $F$ ranges over forest covers of $G$.\footnote{We formulate this notion in order-theoretic rather than graph-theoretic language, but it is equivalent to the definition in \cite{nevsetvril2006tree}.} It is clear that we can restrict  to forest covers of the form $(V, {\leq})$, since given a forest cover $(F, {\leq})$ of $G = (V, {\adj})$, $(V, \, {\leq} \cap V^2)$ is also a forest cover of $G$, and $\hgt(V) \leq \hgt(F)$. Henceforth, by forest covers of $G$ we shall mean those with universe $V$.

Given a $\sg$-structure $\As$, the Gaifman graph $\Gf(\As)$ is $(A, \adj)$, where $a \adj a'$ iff for some relation $R \in \sg$, for some $(a_1, \ldots , a_n) \in \RA$, $a = a_i$, $a' = a_j$, $a \neq a'$. The tree-depth of $\As$ is $\td(\Gf(\As))$.

\begin{theorem}
Let $\As$ be a finite $\sg$-structure, and $k>0$. There is a bijective correspondence between
\begin{enumerate}
\item $\Ek$-coalgebras $\alpha : \As \rarr \Ek \As$.
\item Forest covers of $\Gf(\As)$ of height $\leq k$.
\end{enumerate}
\end{theorem}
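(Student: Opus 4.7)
The plan is to extract a forest structure on $A$ from the data of a coalgebra, and conversely to build a coalgebra from a forest cover, then verify the two passages are mutually inverse.

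The first step is to unpack what the coalgebra laws say about a map $\alpha : A \to \Ek A$. The counit equation $\epsA \circ \alpha = \id_A$ forces $\alpha(a)$ to be a sequence of length $\leq k$ ending in $a$; in particular, $\alpha$ is injective. Unfolding $\delta_{\As}(\alpha(a)) = \Ek\alpha(\alpha(a))$ on a value $\alpha(a) = [a_1,\ldots,a_j]$, the left side is the list of all prefixes $[[a_1],[a_1,a_2],\ldots,[a_1,\ldots,a_j]]$, while the right side is $[\alpha(a_1),\ldots,\alpha(a_j)]$. Equating these componentwise yields the key identity $\alpha(a_i) = [a_1,\ldots,a_i]$ for all $i\leq j$. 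Hence the image of $\alpha$ is a prefix-closed subset of $\Alk$ in bijection with $A$.

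From this, I would define the forest order on $A$ by $a' \leq a \iff \alpha(a') \preford \alpha(a)$. Prefix-closure together with injectivity of $\alpha$ ensures that the set of predecessors of $a$ is precisely $\{a_1,\ldots,a_j\}$ when $\alpha(a)=[a_1,\ldots,a_j]$, which is a finite chain; hence $(A,\leq)$ is a forest, of height $\leq k$ because $|\alpha(a)|\leq k$. To obtain the forest cover property, I would use the hypothesis that $\alpha$ is a homomorphism: for any tuple $(a_1,\ldots,a_n)\in\RA$, the definition of $\REk$ requires the $\alpha(a_i)$ to be pairwise comparable in the prefix order, hence the $a_i$ are pairwise comparable in $\leq$. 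This is exactly the condition that the adjacency relation of $\Gf(\As)$ lies inside the comparability relation $\comp$.

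For the reverse direction, given a forest cover $(A,\leq)$ of $\Gf(\As)$ of height $\leq k$, I would define $\alpha(a) := [a_1,\ldots,a_j]$ where $a_1 < a_2 < \cdots < a_j = a$ is the unique path from a root to $a$ in the forest. The length bound $j\leq k$ is immediate from $\hgt(A)\leq k$. The counit equation holds by construction, and prefix-closure (predecessors of $a_i$ on the path to $a_i$ are an initial segment of those for $a$) gives coassociativity. For the homomorphism property: for $(a_1,\ldots,a_n)\in\RA$, any two distinct $a_i,a_j$ are adjacent in $\Gf(\As)$, so comparable in the forest, so their paths are comparable as prefixes; together with $\epsA(\alpha(a_i))=a_i$, this verifies $\REk(\alpha(a_1),\ldots,\alpha(a_n))$.

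Finally I would check that the two constructions are mutually inverse: starting from $\alpha$, the forest order recovers $\alpha$ because the path from the root to $a$ in the prefix forest is exactly the sequence $\alpha(a)$; starting from $\leq$, the prefix order on root-to-vertex paths reproduces $\leq$. I expect the only delicate point to be the precise matching between the three-part definition of $\REk$ and the Gaifman/forest-cover condition; once one observes that the third clause of $\REk$ is automatic from the counit equation while the first clause is exactly pairwise comparability, the equivalence between ``$\alpha$ is a homomorphism'' and ``$\leq$ is a forest cover'' drops out cleanly.
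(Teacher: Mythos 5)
Your proposal is correct and follows essentially the same route as the paper's own proof: the coalgebra laws yield a prefix-closed injective image, the forest order is pulled back along $\alpha$ from the prefix order, and the homomorphism condition translates exactly into the forest-cover (comparability) condition, with the converse built from root-to-vertex chains. The only difference is that you spell out the comultiplication computation and the mutual-inverse check, which the paper leaves implicit.
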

\begin{proof}
Suppose that $\alpha : \As \to \Ek \As$ is a coalgebra. For $a \in A$, let $\alpha(a) = [a_1, \ldots , a_j]$. The first coalgebra equation says that $\alpha(a_i) = [a_1, \ldots , a_i]$, $1 \leq i \leq j$. The second says that $a_j = a$. Thus $\alpha : A \to \Alk$ is an injective map whose image is a prefix-closed subset of $\Alk$. Defining $a \leq a'$ iff $\alpha(a) \preford \alpha(a')$ yields a forest order on $A$, of height $\leq k$. If $a \adj a'$ in $\Gf(\As)$, for some $a_1, \ldots , a_n$ with $a = a_i$, $a' = a_j$, we have $\RA(a_1, \ldots , a_n)$. Since $\alpha$ is a homomorphism, we must have $R^{\Ek \As}(\alpha(a_1), \ldots , \alpha(a_n))$, hence $\alpha(a_i) \comp \alpha(a_j)$, and so $a_i \comp a_j$. Thus $(A, {\leq})$ is a forest cover of $\As$, of height $\leq k$.

Conversely, given such a forest cover $(A, {\leq})$, for each $a \in A$, its predecessors form a chain $a_1 < \cdots < a_j$, with $a_j = a$, and $j \leq k$.
We define $\alpha(a) = [a_1, \ldots , a_j]$, which yields a map $\alpha : A \to \Alk$, which evidently satisfies the coalgebra equations. If $\RA(a_1, \ldots , a_n)$, then since $(A, {\leq})$ is a forest cover, we must have $a_i \comp a_j$ for all $i, j$, and hence $\alpha(a_i) \comp \alpha(a_j)$. Thus $\alpha$ is a homomorphism.
\end{proof}

\noindent We write $\cnE(\As)$ for the coalgebra number of $\As$ with respect to the the Ehrenfeucht-\Fraisse comonad.

\begin{theorem}
For all finite structures $\As$: $\td(\As) \, = \, \cnE(\As)$.
\end{theorem}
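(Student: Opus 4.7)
The plan is to derive this equality as a direct corollary of the bijective correspondence established in the preceding theorem, so no new combinatorial work should be needed.

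First I would unfold the definitions. By definition, $\cnE(\As)$ is the least $k \geq 1$ such that there exists an $\Ek$-coalgebra $\alpha : \As \to \Ek \As$. By definition, $\td(\As) = \td(\Gf(\As)) = \min_F \hgt(F)$, where $F$ ranges over forest covers of $\Gf(\As)$ (with universe $A$, as noted in the remark immediately preceding the definition of the Gaifman graph). So the goal reduces to showing that the minimum $k$ for which an $\Ek$-coalgebra exists on $\As$ coincides with the minimum height of a forest cover of $\Gf(\As)$.

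Next I would apply the bijective correspondence from the preceding theorem: for each $k > 0$, an $\Ek$-coalgebra on $\As$ exists if and only if there is a forest cover of $\Gf(\As)$ of height $\leq k$. This gives
\[
\cnE(\As) \; = \; \min \{ k > 0 \mid \text{there is a forest cover of } \Gf(\As) \text{ of height} \leq k \} \; = \; \min_{F} \hgt(F) \; = \; \td(\As),
\]
where the middle equality holds because $\As$ is finite, so $\Gf(\As)$ has at least one forest cover (e.g., any linear order on $A$) and the minimum height is attained. This gives both inequalities: a forest cover of minimal height $k_0 = \td(\As)$ produces an $\Ek$-coalgebra with $k = k_0$ (so $\cnE(\As) \leq \td(\As)$); conversely, an $\Ek$-coalgebra with $k = \cnE(\As)$ yields a forest cover of height $\leq \cnE(\As)$ (so $\td(\As) \leq \cnE(\As)$).

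There is essentially no obstacle here — the real content lives in the preceding theorem, which already translates coalgebras into forest covers and vice versa while matching the height bound to the resource parameter $k$. The only minor care needed is the edge case where $\As$ has empty universe or the degenerate small-$k$ cases, but for finite $\As$ these are handled uniformly by the bijection, which matches height exactly to the sequence-length bound $k$ in $\Ek$.
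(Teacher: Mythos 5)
Your proposal is correct and matches the paper's approach exactly: the paper states this theorem as an immediate corollary of the preceding bijective correspondence between $\Ek$-coalgebras and forest covers of height $\leq k$, which is precisely the reduction you carry out. Your extra remarks on attainment of the minimum for finite structures and the degenerate cases are fine and do not change the substance.
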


\subsection{The pebbling comonad and tree-width}

We review the notions of tree decompositions and tree-width. A tree $(T, {\leq})$ is a forest with a least element (the root). A tree is easily seen to be a meet-semilattice: every pair of elements $x, x'$ has a greatest lower bound $x \wedge x'$ (the greatest common ancestor). The path from $x$ to $x'$ is  the set
$\pth(x, x') := [x \wedge x', x] \cup [x \wedge x', x']$, where we use interval notation: $[y, y'] := \{ z \in T \mid y \leq z \leq y' \}$. 

A tree-decomposition of a graph $G = (V, {\adj})$ is a tree $(T, {\leq})$ together with a labelling function $\lbfn : T \rarr \pow(V)$ satisfying the following conditions: 
\begin{itemize}
\item (TD1) for all $v \in V$, for some $x \in T$, $v \in \lbfn(x)$; 
\item (TD2) if $v \adj v'$, then for some $x \in T$, $\{ v, v' \} \subseteq \lbfn(x)$; 
\item (TD3) if $v \in \lbfn(x) \cap \lbfn(x')$, then for all $y \in \pth(x, x')$, $v \in \lbfn(y)$. 
\end{itemize}
The width of a tree decomposition is given by $\max_{x \in T} |\lbfn(x)| -1$. We define the tree-width  $\tw(G)$ of a graph $G$ as $\min_{T} \mathsf{width}(T)$, where $T$ ranges over tree decompositions of $G$.

We shall now give an alternative formulation of tree-width which will provide a useful bridge to the coalgebraic characterization. It is also interesting in its own right: it clarifies the relationship between tree-width and tree-depth, and shows how pebbling arises naturally in connection with tree-width.

A $k$-pebble forest cover for a graph $G = (V, {\adj})$ is a forest cover $(V, {\leq})$ together with a pebbling function $p : V \to \kset$ such that, if $v \adj v'$ with $v \leq v'$, then for all $w \in (v,v']$, $p(v) \neq p(w)$.

The following result is implicit in \cite{abramsky2017pebbling}, but it seems worthwhile to set it out more clearly.

\begin{theorem}
Let $G$ be a finite graph. The following are equivalent:
\begin{enumerate}
\item $G$ has a tree decomposition of width $< k$.
\item $G$ has a $k$-pebble forest cover.
\end{enumerate}
\end{theorem}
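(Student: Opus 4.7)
The plan is to give explicit constructions in both directions and verify the axioms.

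For $(2) \Rightarrow (1)$, starting from a $k$-pebble forest cover $(V, {\leq}, p)$, I adjoin a fresh root $\bot$ below the roots of the forest to form a tree $T$, and for each $w \in T$ define the bag
\[ \lambda(w) \, := \, \{ v \leq w \mid p(v) \neq p(u) \text{ for every } u \in (v, w] \}, \]
the ancestors of $w$ whose pebble has not been reused on the descent to $w$. Since the elements of $\lambda(w)$ carry pairwise distinct pebble indices, $|\lambda(w)| \leq k$, giving width $< k$. (TD1) is immediate from $v \in \lambda(v)$; (TD2) is exactly the pebble forest cover condition, placing both $v$ and $v'$ in $\lambda(v')$ whenever $v \adj v'$ with $v \leq v'$; and (TD3) follows because $v \in \lambda(x) \cap \lambda(x')$ forces $v \leq x \wedge x'$, and for any $y \in \pth(x,x')$ the interval $(v,y]$ is contained in $(v,x]$ or $(v,x']$, so pebble-distinctness is inherited and $v \in \lambda(y)$.

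For $(1) \Rightarrow (2)$, starting from a tree decomposition $(T, \lambda)$ of width $< k$ with $T$ rooted, I first perform a standard subdivision so that at most one new vertex is introduced per node: this preserves the decomposition axioms and the width, and ensures that each $v \in V$ has a unique introduction node $y_v \in T$ (the topmost bag containing $v$). I equip $V$ with the order $v \leq v'$ iff $y_v \leq y_{v'}$; this is a forest order, being the pullback of $T$'s along the injection $v \mapsto y_v$. If $v \adj v'$ then by (TD2) they lie in some common bag $\lambda(x)$, forcing $y_v$ and $y_{v'}$ onto the common root-to-$x$ path in $T$ and hence to be comparable, so $(V, {\leq})$ is a forest cover of $G$. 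Pebbles are then assigned by a DFS of $T$: upon introducing $v$ at $y_v$, set $p(v)$ to any index in $\kset$ not used by the at most $k-1$ other members of $\lambda(y_v)$, which is always possible since $|\lambda(y_v)| \leq k$.

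To verify the pebble condition, suppose $v \adj v'$ with $v \leq v'$ and let $w \in (v, v']$, so $y_v < y_w \leq y_{v'}$ in $T$. Since $v$ and $v'$ share some bag by (TD2), (TD3) applied to $v$ shows $v \in \lambda(y_{v'})$, and then (TD3) applied along the interval $[y_v, y_{v'}]$ shows $v \in \lambda(y_w)$. Thus when $w$ is introduced at $y_w$, the vertex $v$ is already in $\lambda(y_w)$ with its fixed pebble, so the greedy rule forces $p(w) \neq p(v)$. The main obstacle is the well-definedness of the forest order in $(1) \Rightarrow (2)$: the naive alternative ``$v \leq v'$ iff $v \in \lambda(y_{v'})$'' fails to be transitive, since a vertex may be forgotten and a new one introduced further below it; using the intrinsic $T$-order via $v \mapsto y_v$ sidesteps this cleanly once the subdivision preprocessing is in place.
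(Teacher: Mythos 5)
Your proof is correct and takes essentially the same route as the paper's: in $(1) \Rightarrow (2)$ you normalize the decomposition so that each node introduces at most one vertex (the paper's ``orderly'' condition, obtained there from nice tree decompositions), map each vertex to its introduction node to pull back the tree order, and assign pebbles greedily top-down avoiding the pebbles of the other bag members; in $(2) \Rightarrow (1)$ your bags $\lambda(w)$ are exactly the paper's ``active predecessors''. The verifications of (TD1)--(TD3) and of the pebbling condition match the paper's arguments step for step, so there is nothing to correct.
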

\begin{proof}
$(1) \Rightarrow (2)$. Assume that $G = (V, \adj)$ has a tree decomposition $(T, {\leq}, \lambda)$ of width $< k$. We say that a tree decomposition is \emph{orderly} if it has the following property: for all $x \in T$, there is at most one $v \in \lambda(x)$ such that for all $x' < x$, $v \not\in \lambda(x')$.
Nice tree decompositions are orderly \cite{kloks1994treewidth}; hence by standard results, without loss of generality we can assume that the given tree decomposition is orderly.

For any $v \in V$, the set of $x \in T$ such that $v \in \lambda(x)$ is non-empty by (TD1), and closed under meets by (TD3). Since $T$ is a tree, this implies that this set has a least element $\tau(v)$. This defines a function $\tau : V \to T$. The fact that tree decomposition is orderly implies that $\tau$ is injective. We can define an order on $V$ by $v \leq v'$ iff $\tau(v) \leq \tau(v')$. This is isomorphic to a sub-poset of $T$, and hence is a forest order.

We define $p : V \to \kset$ by induction on this order. Assuming $p(v')$ is defined for all $v' < v$, we consider $\tau(v)$. Since the tree decomposition is orderly, this means in particular that $p(v')$ is defined for all $v' \in S  :=  \lambda(\tau(v)) \setminus \{v\}$. Since the decomposition is of width $< k$, we must have $|S| < k$. We set $p(v) := \min (\kset \setminus \{ p(v') \mid v' \in S \})$.

To verify that $(V, {\leq})$ is a forest cover, suppose that $v \adj v'$. By (TD2), for some $x \in T$, $\{ v, v' \} \subseteq \lambda(x)$. We have $\tau(v) \leq x \geq \tau(v')$, and since $T$ is a tree, we must have $\tau(v) \, \comp \, \tau(v')$, whence $v \, \comp \, v'$.

Finally, we must verify the condition on the pebbling function $p$. Suppose that $v \adj v'$, and $v < w \leq v'$. 
%We must show that $p(v) \neq p(w)$.
Since $v \adj v'$, for some $x$, $\{ v, v' \} \subseteq \lambda(x)$. But then $\tau(v) < \tau(w) \leq \tau(v') \leq x$. Since $v \in \lambda(\tau(v)) \cap \lambda(x)$, by (TD3), $v \in \lambda(\tau(w))$. By construction of the pebbling function, this implies $p(v) \neq p(w)$.

$(2) \Rightarrow (1)$. Suppose that $(V, {\leq}, p)$ is a $k$-pebble forest cover of $G$. We define a tree $T = V_{\bot}$ by adjoining a least element $\bot$  to $V$. We say that $v$ is an active predecessor of $v'$ if $v \leq v'$, and for all $w \in (v, v']$, $p(v) \neq p(w)$. We define the labelling function by setting $\lambda(v)$ to be the set of active predecessors of $v$; $\lambda(\bot) := \es$. Since $p |_{\lambda(v)}$ is injective, $|\lambda(v)| \leq k$.

We verify the tree decomposition conditions. (TD1) holds, since $v \in \lambda(v)$. (TD2) If $v \adj v'$, then $v \comp v'$. Suppose $v \leq v'$.
Then $v$ is an active predecessor of $v'$, and $\{ v, v' \} \subseteq \lambda(v')$.
(TD3) Suppose $v \in \lambda(v_1) \cap \lambda(v_2)$. Then $v$ is an active predecessor of both $v_1$ and $v_2$. This implies that for all $w \in \pth(v_1,v_2)$, $v$ is an active predecessor of $w$, and hence $v \in \lambda(w)$.
\end{proof}

\begin{theorem}
Let $\As$ be a finite $\sg$-structure. There is a bijective correspondence between:
\begin{enumerate}
\item $\Pk$-coalgebras $\alpha : \As \to \Pk \As$
\item $k$-pebble forest covers of $\Gf(\As)$.
\end{enumerate}
\end{theorem}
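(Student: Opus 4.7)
My plan is to adapt the argument used for the Ehrenfeucht-\Fraisse comonad in the previous theorem, replacing plain sequences with pebble-indexed sequences and accounting for the additional suffix clause in the relational interpretation of $\Pk \As$. The guiding intuition is that the moves of $\alpha(a)$ should trace the chain of predecessors of $a$ in a forest cover, with the pebble index stored at each step giving the pebbling function.

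For the forward direction, I start with a coalgebra $\alpha : \As \to \Pk \As$ and write $\alpha(a) = [(p_1, a_1), \ldots , (p_j, a_j)]$. The counit law $\epsA \circ \alpha = \id_A$ forces $a_j = a$, and the comultiplication law $\delta_{\As} \circ \alpha = \Pk \alpha \circ \alpha$ forces $\alpha(a_i) = [(p_1, a_1), \ldots , (p_i, a_i)]$ for each $i \leq j$. Hence $\alpha$ is an injection of $A$ onto a prefix-closed subset of $\Pk A$, so setting $a \leq a'$ iff $\alpha(a) \preford \alpha(a')$ defines a forest order on $A$, and $p(a) := p_j$ gives a pebbling function $p : A \to \kset$. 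To check that $(A, {\leq}, p)$ is a $k$-pebble forest cover of $\Gf(\As)$, suppose $a \adj a'$; then some tuple $(c_1, \ldots , c_n) \in \RA$ contains both entries, and the homomorphism property yields $R^{\Pk \As}(\alpha(c_1), \ldots , \alpha(c_n))$. Clause (1) of this relation (pairwise prefix-comparability) gives $a \comp a'$, and clause (2) (the suffix constraint on pebble indices) says, when $a \leq a'$, that $p(a)$ does not appear among the moves of $\alpha(a')$ strictly after $\alpha(a)$; but by construction those moves are exactly $(p(w), w)$ for $w \in (a, a']$, giving the required inequality $p(a) \neq p(w)$.

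For the converse, given a $k$-pebble forest cover $(A, {\leq}, p)$ of $\Gf(\As)$, I let $a_1 < \cdots < a_j = a$ be the chain of predecessors of $a$ in the forest and set $\alpha(a) := [(p(a_1), a_1), \ldots , (p(a_j), a_j)]$. The coalgebra equations are immediate from this construction. For the homomorphism condition, given $(c_1, \ldots , c_n) \in \RA$, the forest cover condition forces pairwise comparability of the $c_i$ (since any two distinct entries are adjacent in $\Gf(\As)$), hence pairwise prefix-comparability of the $\alpha(c_i)$, while the pebbling condition $p(c_i) \neq p(w)$ for $w \in (c_i, c_j]$ delivers clause (2). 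The two assignments are manifestly mutually inverse, since the $i$th move of $\alpha(a)$ records precisely the $i$th predecessor of $a$ together with its pebble index.

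The main obstacle I anticipate is the careful bookkeeping needed to align clause (2) of the definition of $R^{\Pk \As}$ with the pebbling condition of a $k$-pebble forest cover. This requires identifying the suffix of $\alpha(a)$ inside $\alpha(a')$ (when $a \leq a'$) with the half-open interval $(a, a']$ in the forest order, an identification which hinges on the prefix-closure of the image of $\alpha$ forced by the coalgebra laws. Once this dictionary is established, the two conditions match clause by clause and the bijection drops out.
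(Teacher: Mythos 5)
Your proof is correct. Note first that the paper itself gives no inline proof of this statement: its proof is the single line ``See \cite[Theorem 6]{abramsky2017pebbling}''. What you have done instead is transport the paper's own proof of the analogous Ehrenfeucht-\Fraisse result ($\Ek$-coalgebras correspond to forest covers of height $\leq k$) to the pebbled setting, and the transport goes through: the counit law forces the last component of $\alpha(a)$ to be $a$, the comultiplication law forces $\alpha(a_i)$ to be the $i$-th prefix of $\alpha(a)$, hence the image of $\alpha$ is prefix-closed, and this prefix-closure is what legitimises your key dictionary identifying the suffix of $\alpha(a)$ in $\alpha(a')$ with the set of moves $(p(w),w)$ for $w$ in the interval $(a,a']$; that dictionary is exactly what aligns clause (2) of $R^{\Pk\As}$ with the pebbling condition of a $k$-pebble forest cover. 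Two small points you should make explicit: in the converse direction, clause (3) of $R^{\Pk\As}$ also needs checking, though it is immediate since $\epsA\alpha(c_i)=c_i$ by construction; and, in contrast with the $\Ek$ case, there is rightly no height bound on the forest here, since plays in $\Pk\As$ have unbounded length and the resource bound $k$ is instead carried by the pebbling function into $\kset$ --- your construction handles this correctly. Compared with the paper's outsourced proof, which in \cite{abramsky2017pebbling} is phrased directly in terms of tree decompositions of width $<k$, your argument has the advantage of composing cleanly with the paper's inline theorem equating $k$-pebble forest covers with such decompositions, and it also makes visible that finiteness of $\As$ plays no essential r\^ole in the bijection itself.
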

\begin{proof}
See \cite[Theorem 6]{abramsky2017pebbling}.
\end{proof}

\noindent We write $\cnP(\As)$ for the coalgebra number of $\As$ with respect to the the pebbling comonad.

\begin{theorem}
For all finite structures $\As$: $\tw(\As) \, = \, \cnP(\As) - 1$.
\end{theorem}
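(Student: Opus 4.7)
The plan is to chain together the two preceding theorems, since they already do all the combinatorial work; the present theorem is essentially a bookkeeping corollary.

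First, I would unfold the definition of $\cnP(\As)$. By definition, $\cnP(\As)$ is the least $k$ such that there exists a $\Pk$-coalgebra $\alpha : \As \to \Pk \As$. By the immediately preceding theorem (the bijective correspondence between $\Pk$-coalgebras on $\As$ and $k$-pebble forest covers of $\Gf(\As)$), this is equal to the least $k$ such that $\Gf(\As)$ admits a $k$-pebble forest cover.

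Next, I would invoke the earlier theorem equating $k$-pebble forest covers with tree decompositions: $\Gf(\As)$ has a $k$-pebble forest cover iff $\Gf(\As)$ has a tree decomposition of width $< k$. Hence the least $k$ for which a $k$-pebble forest cover exists is $\tw(\Gf(\As)) + 1$. Since $\tw(\As)$ is defined as $\tw(\Gf(\As))$, we conclude
\[ \cnP(\As) \; = \; \tw(\Gf(\As)) + 1 \; = \; \tw(\As) + 1, \]
which rearranges to the claim.

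The only subtlety worth flagging is the off-by-one arising from the definition of width (namely $\max_{x} |\lambda(x)| - 1$), which is exactly what forces the ``$-1$'' in the statement; beyond that the argument is purely a matter of composing the two previously established bijective correspondences, so there is no substantive obstacle.
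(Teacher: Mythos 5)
Your proposal is correct and matches the paper's approach: the paper states this theorem without further proof precisely because it is the immediate corollary of chaining the two preceding theorems (coalgebras $\leftrightarrow$ $k$-pebble forest covers $\leftrightarrow$ tree decompositions of width $< k$), which is exactly what you do. Your flagging of the off-by-one from the definition of width is the right observation and is handled correctly.
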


\subsection{The modal comonad and synchronization tree depth}.

Let $\As$ be a Kripke structure. It will be convenient to write labelled transitions $a \labar{\alpha} a'$ for $\Ralph(a, a')$.
Given $a \in A$, the submodel generated by $a$ is obtained by restricting the universe to the set of $a'$ such that there is a path $a \labar{\alpha_1} \cdots \labar{\alpha_k} a'$. This submodel forms a synchronization tree \cite{milner1980calculus} if for all $a'$, there is a unique such path. The height of such a tree is the maximum length of any path from the root $a$.

\begin{proposition}
Let $\As$ be a Kripke structure, with $a \in A$. The following are equivalent:
\begin{enumerate}
\item There is a coalgebra $\alpha : (\As, a) \to \Mk (\As, a)$.
\item The submodel generated by $a$ is a synchronization tree of height $\leq k$.
\end{enumerate}
\end{proposition}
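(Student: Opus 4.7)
The plan is to unpack the two coalgebra axioms in the concrete description of $\Mk$, and show that together with the homomorphism condition they force $\alpha(a')$ to be a prescribed unique path from $a$ to $a'$.

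First, for (1) $\Rightarrow$ (2), suppose $\alpha : (\As, a) \to \Mk (\As, a)$ is a coalgebra. As a morphism of pointed structures, $\alpha(a) = [a]$. The counit axiom $\epsA \circ \alpha = \id$ says that $\alpha(a')$ is a sequence ending in $a'$, so $\alpha(a')$ is witness to a path of length $\leq k$ from $a$ to $a'$; in particular every $a' \in A$ is reachable from $a$. Computing the coassociativity axiom $\delta_{\As} \circ \alpha = \Mk \alpha \circ \alpha$ using the recursive formula for the Kleisli coextension, one finds $\delta(s) = [[a],\alpha_1,[a_0,\alpha_1,a_1],\ldots,\alpha_j,s]$ whereas $\Mk\alpha(s) = [[a],\alpha_1,\alpha(a_1),\ldots,\alpha_j,\alpha(a_j)]$ for $s = [a_0,\alpha_1,a_1,\ldots,\alpha_j,a_j]$, so the axiom is equivalent to the prefix condition $\alpha(a_i) = [a_0,\alpha_1,\ldots,\alpha_i,a_i]$ for $0 \leq i \leq j$.

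Next, I bring in the homomorphism condition. For any transition $a' \labar{\alpha} a''$ in $\As$ we need $R^{\Mk (\As, a)}_{\alpha}(\alpha(a'),\alpha(a''))$, which by the definition of the binary relations in $\Mk(\As, a)$ means $\alpha(a'') = \alpha(a')[\alpha, a'']$. Combined with $\alpha(a) = [a]$, an induction on path length shows that the value of $\alpha$ on any reachable vertex is forced to be the extension of $\alpha$ along \emph{any} chosen path from $a$. Hence for any two paths from $a$ to the same vertex $a'$ the equation $\alpha(a') = \cdots$ gives the same sequence, so the two paths coincide: reachability from $a$ is via unique paths, and $A$ coincides with the submodel generated by $a$. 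The length bound $j \leq k$ built into $\Mk$ yields height $\leq k$.

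For the converse (2) $\Rightarrow$ (1), assume the submodel generated by $a$ is a synchronization tree of height $\leq k$, so every $a' \in A$ has a unique path $\pi(a') = [a,\alpha_1,a_1,\ldots,\alpha_j,a']$ from $a$ of length $j \leq k$. Define $\alpha(a') := \pi(a')$. The counit axiom and the pointed condition $\alpha(a) = [a]$ are immediate. For coassociativity, the prefixes of $\pi(a')$ are the unique paths to the ancestors of $a'$, so the required prefix identity $\alpha(a_i) = [a,\alpha_1,\ldots,a_i]$ holds. For the homomorphism property, if $a' \labar{\alpha} a''$ then $\pi(a')$ followed by $[\alpha,a'']$ is a path to $a''$, which by uniqueness must equal $\pi(a'')$, so $\alpha(a'') = \alpha(a')[\alpha,a'']$, and unary predicates are preserved since $\PMA(\alpha(a')) \iff \PA(\epsA(\alpha(a'))) = \PA(a')$.

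I expect the main technical obstacle to be the bookkeeping in the first direction: correctly computing both $\delta_{\As} \circ \alpha$ and $\Mk\alpha \circ \alpha$ from the Kleisli-form definition and reading off that the conjunction of coassociativity and the homomorphism condition pins down a tree with prescribed edge labels, rather than merely a tree cover of the accessibility graph.
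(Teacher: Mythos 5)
The paper states this proposition without proof, so there is no official argument to compare against; your strategy of unpacking the two coalgebra laws and the homomorphism condition in the concrete description of $\Mk$ is the natural one, and your forward direction is complete and correct. Your computations of $\delta_{(\As,a)} \circ \alpha$ and $\Mk\alpha \circ \alpha$ are right, and your induction correctly shows that pointedness ($\alpha(a)=[a]$), the counit law, and the homomorphism condition force $\alpha(a')$ to be the encoding of \emph{every} path from $a$ to $a'$, whence paths are unique and of length $\leq k$. In fact your own argument shows that coassociativity is redundant here: the prefix condition you extract from it is a consequence of the forcing argument, so any pointed homomorphism splitting the counit is automatically an $\Mk$-coalgebra. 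That is a nice structural observation you could have made explicit.

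There is, however, one gap, in the converse direction. From ``the submodel generated by $a$ is a synchronization tree of height $\leq k$'' you immediately write ``so every $a' \in A$ has a unique path $\pi(a')$ from $a$'' --- but statement (2) says nothing about elements of $A$ outside the generated submodel. If $\As$ has an element $c$ not reachable from $a$ (say an isolated point with no transitions), then (2) holds while (1) fails: the counit law would require $\epsA(\alpha(c)) = c$, yet every element of $\Mk(\As,a)$ is a path starting at $a$ and hence ends at a reachable element. Indeed, your own forward direction proves exactly this obstruction: (1) implies that \emph{all} of $A$ is reachable from $a$. So as literally stated the implication $(2) \Rightarrow (1)$ is false, and no argument can close that step; the intended reading (consistent with the paper's remark that having a coalgebra at all is a strong constraint on the transition system) is that $\As$ coincides with the submodel generated by $a$, i.e.\ that (2) be read as ``$\As$ itself is a synchronization tree rooted at $a$ of height $\leq k$.'' Your proof of the converse is correct under that reading, but you should have flagged the assumption, especially since your forward direction visibly delivers more than (2) as stated.
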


\noindent We define the modal depth $\md(\As, a) = k$ if the submodel generated by $a$ is a synchronization tree of height $k$.

\begin{theorem}
Let $\As$ be a Kripke structure, and $a \in A$ be such that the submodel generated by $a$ is a synchronization tree of finite height.
Then $\md(\As, a) = \cnM(\As, a)$.
\end{theorem}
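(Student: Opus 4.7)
The plan is to reduce the theorem directly to the previous proposition, since the content of the theorem is essentially a minimization restatement of the coalgebra-existence criterion established there.

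First, I would unpack the definition of $\cnM(\As, a)$: it is the least $k > 0$ such that an $\Mk$-coalgebra $\alpha : (\As, a) \to \Mk(\As, a)$ exists. The previous proposition asserts that such an $\alpha$ exists exactly when the submodel generated by $a$ is a synchronization tree of height $\leq k$. Let $h = \md(\As, a)$, which is well-defined and finite by the hypothesis on $(\As, a)$.

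For the upper bound $\cnM(\As, a) \leq \md(\As, a)$, I would take $k = h$ and apply the implication $(2) \Rightarrow (1)$ of the proposition: the submodel generated by $a$ is, by definition of $h$, a synchronization tree of height $h \leq h$, so there is an $\Mk$-coalgebra on $(\As, a)$ for $k = h$. Hence $\cnM(\As, a) \leq h$.

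For the lower bound $\md(\As, a) \leq \cnM(\As, a)$, I would set $k = \cnM(\As, a)$ and apply the implication $(1) \Rightarrow (2)$: the existence of an $\Mk$-coalgebra forces the submodel generated by $a$ to be a synchronization tree of height $\leq k$. Since the height of a tree is uniquely determined, the height $h = \md(\As, a)$ satisfies $h \leq k = \cnM(\As, a)$, completing the argument.

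There is no real obstacle here once the previous proposition is in hand; the only mild subtlety is ensuring $\md(\As, a)$ is well-defined, but this is immediate because the synchronization tree generated by $a$ is unique (the universe is fixed as the set of reachable points, and each reachable point has a unique path from $a$), so its height is an unambiguous natural number. The essential content has been extracted by the previous proposition, and this theorem simply translates the "least $k$" quantifier on both sides.
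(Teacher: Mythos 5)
Your proposal is correct and is exactly the argument the paper intends: the theorem is an immediate corollary of the preceding proposition, obtained by applying the equivalence in both directions to translate "least $k$ admitting an $\Mk$-coalgebra" into "least $k$ bounding the height of the synchronization tree," which is the height itself. The paper leaves this minimization step implicit, so your write-up matches its approach while spelling out the two inequalities.
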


Note the conditional nature of this result, which contrasts with those for the other comonads. The modal comonad is defined in such a way that the universe $\Mk (A, a)$ reflects information about the possible transitions. Thus having a coalgebra at all, regardless of the value of the resource parameter, is a strong constraint on the structure of the transition system.

\section{Further Directions}

From the categorical perspective, there is considerable additional structure which we have not needed for the results in this paper, but which may be useful for further investigations.
%We shall briefly mention some aspects.

\textbf{Coequaliser requirements}
In Moggi's work on computational monads, there is an ``equaliser requirement'' \cite{moggi1991notions}. The dual version for a comonad $(G, \varepsilon, \delta)$ is that for every object $A$, the following diagram is a coequaliser:
\begin{center}
\begin{tikzcd}
G^2 A \ar[r, "G \epsA", shift left=1.5ex] \ar[r, "\varepsilon_{GA}"']
& GA \ar[r, "\epsA"]
& A
\end{tikzcd}
\end{center}
This says in particular that the counit is a regular epi, and hence  $GA$ ``covers'' $A$ in a strong sense.

This coequaliser requirement holds for all our comonads. For $\Ek$, this is basically the observation that, given a sequence of sequences $[s_1, \ldots , s_j]$, we have $\ve[\ve s_1, \ldots , \ve s_j] \, = \, \ve s_j$. The other cases are similar.

\textbf{Indexed and graded structure}
Our comonads $\Ek$, $\Pk$, $\Mk$ are not merely discretely indexed by the resource parameter. In each case, there is a functor $(\Zp, {\leq}) \to \Comon(\CS)$
from the poset category of the positive integers to the category of comonads on $\CS$. Thus if $k \leq l$ there is a natural transformation with components $i^{k,l}_A : \Ek \As \to \El \As$, which preserves the counit and comultiplication; and similarly for the other comonads. Concretely, this is just including the plays of up to $k$ rounds in the plays of up to $l$ rounds, $k \leq l$.

Another way of parameterizing comonads by resource information is grading \cite{gaboardi2016combining}. Recall that comonads on $\CC$ are exactly the comonoids in the strict monoidal category $([\CC, \CC], {\circ}, I)$ of endofunctors on $\CC$ \cite{mac2013categories}. Generalizing this description, a graded comonad is an oplax monoidal functor $G : (M, {\cdot}, 1) \to ([\CC, \CC], {\circ}, I)$ from a monoid of grades into this endofunctor category. This means that for each $m \in M$, there is an endofunctor $G_m$, there is a graded counit natural transformation $\ve : G_{1} \Rightarrow I$, and for all $m, m' \in M$, there is a graded comultiplication $\delta^{m, m'} : G_{m\cdot m'} \Rightarrow G_m G_{m'}$.

The two notions can obviously be combined. We can see our comonads as (trivially) graded, by viewing them as oplax monoidal functors $(\Zp, {\leq}, \min, 1) \to ([\CC, \CC], {\circ}, I)$. Given $k \leq l$, we have e.g. $\Ek \Rightarrow \Ek \Ek \Rightarrow \Ek \El$.
The question is whether there are more interesting graded structures which arise naturally in considering richer logical and computational settings.

\textbf{Colimits and infinite behaviour}
In this paper, we have dealt exclusively with finite resource levels. However, there is an elegant means of passing to infinite levels. We shall illustrate this with the modal comonad. Using the inclusion morphisms described in the  previous discussion of indexed structure, for each structure $\As$ we have a diagram
\[ \MM_1 \As \to \MM_2 \As \to \cdots \to \Mk \As \to \cdots \]
By taking the colimits of these diagrams, we obtain a comonad $\Momega$, which corresponds to the usual unfolding of a Kripke structure to all finite levels. This will correspond to the bisimulation approximant $\sim_{\omega}$, which coincides with bisimulation itself on image-finite structures \cite{hennessy1980observing}. Transfinite extensions are also possible. Similar constructions can be applied to the other comonads. This provides a basis for lifting the comonadic analysis to the level of infinite models.

\textbf{Relations between fragments and parameters}
We can define morphisms between the different comonads we have discussed, which yield proofs about the relationships between the logical fragments they characterize. This categorical perspective avoids the cumbersome syntactic translations  in the standard proofs of these results. For illustration, there is a comonad morphism $t: \Ek \Rightarrow \Pk$ with components $t_A: \Ek \As \rarr \Pk \As$ given by $[a_{1},\dots,a_{j}] \mapsto [(1,a_{1}),\dots,(j,a_{j})]$. Together with  theorems \ref{thm:mainPebble} and \ref{thm:mainEF}, this shows that $\ELk \subseteq \ELvk$ and $\Lck \subseteq \Lvck$. Moreover, composing $t$ with a coalgebra $\As \rarr \Ek \As$ yields a coalgebra $A \rarr \Pk \As$, demonstrating that $\tw(\As) + 1 \leq \td(\As)$. Another morphism $\Momega \Rightarrow \PM_2$ shows that modal logic can be embedded into $2$-variable logic. 
   
\subsection*{Concluding remarks}
Our comonadic constructions for the three major forms of model comparison games show a striking unity, on the one hand, but also some very interesting differences. For the latter, we note the different forms of logical ``deception'' associated with each comonad, the different forms of back-and-forth equivalences, and the different combinatorial parameters which arise in each case.

One clear direction for future work is to gain a deeper understanding of what makes these constructions work. Another is to understand how widely the comonadic analysis of resources can be applied. We are currently investigating the guarded fragment \cite{andreka1998modal,gradel2014freedoms}; other natural candidates include existential second-order logic, and branching quantifiers and dependence logic \cite{vaananen2007dependence}.

Since comonads arise naturally in type theory and functional programming \cite{uustalu2008comonadic,orchard2014programming}, can we connect the study of finite model theory made here with a suitable type theory? Can this lead, via the Curry-Howard correspondence, to the systematic derivation of some significant meta-algorithms, such as decision procedures  for guarded logics based on the tree model property \cite{gradel1999decision}, or  algorithmic metatheorems such as Courcelle's theorem \cite{courcelle1990monadic}?

Another intriguing direction is to connect these ideas with the graded quantum monad studied in \cite{abramskyquantum}, which provides a basis for the study of quantum advantage in $\CS$. This may lead to a form of quantum finite model theory.

\bibstyle{plainurl}
\bibliography{bibfile}

\end{document}